\def\T{{ \mathrm{\scriptscriptstyle T} }}
\newcommand{\tabincell}[2]{\begin{tabular}{@{}#1@{}}#2\end{tabular}}
\def\v{\varepsilon}
\def\bbeta{\bm \beta}
\def\btheta{\bm \theta}
 \def\*#1{\mathbf{#1}}
\begin{document}

\title{Broken Adaptive Ridge Regression for Right-Censored Survival Data
\thanks{The research of Gang Li was partly supported by National Institute of Health Grants P30 CA-16042, P50 CA211015, and UL1TR000124-02.
The research of Zhihua Sun was partly supported by Natural Science Foundation of China 11871444.
The research of Yi Liu was partly supported by Natural Science Foundation of China 11801567.
}}
\vspace{0.8pc}
\centerline{\large\bf Broken Adaptive Ridge Regression for Right-Censored Survival Data}
\vspace{.4cm}
\centerline{Zhihua Sun$^{1}$, Yi Liu$^{1}$, Kani Chen$^{2}$, Gang Li$^{3}$$^{(\textrm{\Letter})}$ 
}
\vspace{.4cm}
\centerline{\it $^{1}$Ocean University of China, $^{2}$Hong Kong University of Science and Technology and}
\centerline{\it $^{3}$University of California at Los Angeles}
\vspace{.55cm}

\institute{ Zhihua Sun \at
              Department of Mathematics, Ocean University of China, Qingdao, China. \\  \email{zhihuasun@ouc.edu.cn}
           \and Yi Liu \at
              Department of Mathematics, Ocean University of China, Qingdao, China.
              \\ \email{liuyi@amss.ac.cn}
             \and Kani Chen \at
  	 Department of Mathematics, Hong Kong University of Science and Technology, Hong Kong. \\ \email{makchen@ust.hk}
            \and \Letter Gang Li \at
              Professor of  Biostatistics and Computational Medicine, University of California, Los Angeles, CA 90095-1772, USA.
              \\ \email{vli@ucla.edu}
}



\begin{abstract}
Broken adaptive ridge (BAR) is a computationally scalable surrogate to $L_0$-penalized regression, which involves iteratively performing reweighted $L_2$ penalized regressions and enjoys some appealing properties of both $L_0$ and $L_2$ penalized regressions while avoiding some of their limitations.
In this paper, we extend the BAR method to the semi-parametric accelerated failure time (AFT) model for right-censored survival data. Specifically, we propose a censored BAR (CBAR) estimator by applying the BAR algorithm to the Leurgan's synthetic data and show that the resulting CBAR estimator is consistent for variable selection, possesses an oracle property for parameter estimation {and enjoys a grouping property for highly correlation covariates}. Both low and high dimensional covariates are considered. The effectiveness of our method is demonstrated and compared  with some popular penalization methods using simulations. Real data illustrations  are provided on a diffuse large-B-cell lymphoma data {and a glioblastoma multiforme data}.
\keywords{Accelerated failure time model  \and {Grouping effect} 
\and $L_0$ penalization \and Right censoring
\and Variable selection}
\end{abstract}

\section{Introduction}
\label{section1}
 $L_0$-penalized regression, which
directly penalizes the
cardinality of a model, has been commonly used  for variable selection {in the low dimensional setting}   via well-known information criteria such as Mallow's $C_p$ \citep{mallows1973some}, Akaike's information
criterion (AIC) \citep{akaike1974new}, the Bayesian information criterion (BIC) \citep{schwarz1978estimating,chen2008extended},
and risk inflation criteria (RIC) \citep{foster1994risk}. It has also been
shown to possess some optimal properties for variable selection and parameter estimation \citep{shen2012likelihood, lin2010risk}.
However, $L_0$-penalization is also known to have some limitations such as being computationally NP-hard,  not scalable to  high dimensional data, and unstable for variable selection \citep{breiman1996heuristics}.
To overcome these shortcomings, the broken adaptive ridge (BAR) method \citep{ dai2018fusedBAR, DAI2018334} has been recently introduced as a surrogate to $L_0$ penalization for simultaneous variable selection and parameter estimation  under the linear model \citep{dai2018fusedBAR, DAI2018334}. {It was noted by \citet{dai2018fusedBAR, DAI2018334} that the BAR estimator, defined as the limit of an iteratively reweighted $L_2$ (ridge) penalization algorithm,  retains some appealing properties of $L_0$ penalization while avoiding its pitfalls. For instance,  BAR  generally yields a more sparse, accurate, and  interpretable model  than some popular $L_1$-type penalization methods such as LASSO and its various variations, while maintaining
comparable prediction performance.
Moreover, unlike the exact $L_0$ penalization, BAR is computationally scalable to high dimensional covariates and is stable for variable selection. Lastly,  in addition to being consistent for variable selection and oracle for  parameter estimation, the BAR estimator enjoys a grouping property for highly correlated covariates, a desirable feature not shared by most other oracle variable selection procedures.}


{Because of its appealing properties, the BAR penalization method has  been recently extended to the \citet{Cox1972} model with censored survival data \citep{Eric,Jianguo2019} via penalized likelihood. However, it is well known that the \citet{Cox1972} proportional hazards assumption do not always hold in practice. Thus it is desirable to extend the BAR penalization method to other common survival regression models.
This paper studies an extension of  the BAR penalization method to the semi-parametric accelerated failure time (AFT)  model, a popular alternative to the Cox model for right censored survival data.  To this end, we note that the semi-parametric AFT model is a linear model for the log-transformed survival time with a completely unspecified error distribution, for which the likelihood approach does not yield a consistent parameter estimator even for the classical uncensored linear regression model. Hence,  the BAR penalized likelihood methods of \citet{Eric} and  \citet{Jianguo2019} for the \citet{Cox1972} do not apply to the semiparmetric AFT model. A different approach would be required.}

In this paper, we propose an extension of the BAR  penalization method to the semi-parametric AFT model by coupling the \citet{Leurgans1987} synthetic data approach with the BAR penalty,  study its large sample properties, and demonstrate it effectiveness in comparison with some popular penalization methods using simulations.
  Specifically,  we first use
the \citet{Leurgans1987} synthetic variable method to construct a synthetic outcome variable and then apply the BAR method  for uncensored linear regression \citep{ DAI2018334} to  the synthetic outcome variable. We then give sufficient conditions under which the proposed censored BAR (CBAR) estimator is consistent for variable selection, behaves asymptotically  as well as the oracle estimator based on the true reduced model, {and possesses a grouping property for  highly correlated covariates. We also combine BAR with a sure joint screening method to obtain a two-step variable selection and parameter estimation method for ultra-high dimensional covariates. Not surprisingly, our simulations demonstrate that  the proposed CBAR method generally yields a more sparse and more accurate model as compared to some other popular penalization methods such as LASSO, SCAD, MCP, and adaptive LASSO within the \citet{Leurgans1987} synthetic data framework, which is consistent with the findings of \citet{DAI2018334} for uncensored data. Lastly, we have implemented the proposed CBAR method in an R package, named CenBAR, and made it publicly available at  https://CRAN.R-project.org/package=CenBAR.}

{Before going further, we note that there exist a number of other variable selection methods in the literature for the semiparametric the AFT model with right censored data. These methods are derived by combining various penalization methods such as LASSO with different extensions of the least squares principle for right censored data.
  For example, the Lasso, bridge, elastic net or MCP penalties have been combined with the \citet{stute1993consistent} weighted least squares method \citep{Huang2005,Huang2010Variable,Datta2007Predicting};
    and    the Dantzig, elastic net, Lasso, adaptive Lasso and SCAD penalties  have been combined with the \citet{Buckley1979Linear} method \citep{Yi2009Dantzig,Wang2008Doubly,Johnson2008Penalized,Johnson2009On}.
This paper makes a unique theoretical contribution  since neither the BAR penalization nor the \citet{Leurgans1987} synthetic data method has been previously rigorously studied in the context of variable selection for the semiparametric the AFT model. We also illustrate and compare empirically the  BAR penalization versus some popular penalization methods when the \citet{Leurgans1987} synthetic data least squares method is used.  We do not  compare different penalization methods when they are coupled with different censored least squares methods because different censored least squares methods are derived under different conditions and  none is expected to dominate another across all scenarios.
\par The rest of the paper is organized as follows. In Section 2, we define our CBAR estimator and state its theoretical properties. We also discuss how to handle ultra-high dimensional covariates. In Section \ref{Numerical}, we evaluate the finite sample performance of CBAR in comparison with other penalization methods via extensive  simulations. In Section 4, we illustrate the CBAR method on
 a diffuse large-B-cell lymphoma data {and a glioblastoma multiforme data} with high dimensional covariates. Proofs of the theoretical results are provided in the appendix.

\section{Censored broken adaptive ridge (CBAR) regression }
\label{section2}
\subsection{Notations and preliminaries}
\subsubsection{Model and data}
Consider the linear regression model
\begin{equation}
\label{model:1}
Y_i= \*x_i^\top \bm \beta+\varepsilon_i, \quad  i=1,2,...,n,
\end{equation}
where  for the $i$th subject, $Y_i$ denotes the response variable, $\*x_i$ is the $p_n$-vector random covariates, $\bm\beta = (\beta_1,...,\beta_{p_n})^\top$ is a vector of regression coefficients,  and $\varepsilon_i$  is {i.i.d.} error term with an unknown error distribution, $E(\varepsilon_i)=0$ and $Var(\varepsilon_i)=\sigma^2<\infty$.  Model (\ref{model:1}) is commonly referred to as the accelerated failure time (AFT) model when $Y$ is  the log-transformed survival  time \citep{Kalbfleisch2002}.

Without loss of generality, assume that {$ \bbeta_{0}=( \bbeta_{01}^\top , \bbeta_{02}^{\top})^\top $} is the true value of $\bbeta$, where
$ \bbeta_{01}$ is a ${q}\times 1$ nonzero vector and $ \bbeta_{02}$ is a $(p_n-{q})\times 1$ zero vector. We further assume the columns of the design matrix {$\*X= (\*x_1,...,\*x_n)^\top$} have mean zero and unit $L_2$-norm. Throughout the paper, $\|\cdot\|$ represents the Euclidean norm for a vector and spectral norm for a matrix.

Assume that one observes a right censored data consisting of $n$ independent and identically distributed triples $(T_i,\delta_i,\*x_i)$, $i=1,\ldots, n$, where for the $i$th subject, $T_i=\min(Y_i,C_i)$ is the observation time, $\delta_i=I(Y_i \leq C_i)$ is a
censoring indicator, $C_i$ is the {i.i.d.} censoring time with the distribution function $H$. $C_i$ is assumed to  be independent of $Y_i$ {and $\*x_i$.}

\subsubsection{Broken adaptive ridge (BAR) for uncensored data}
For reader convenience, we first briefly review the broken adaptive ridge (BAR) estimator of \citet{DAI2018334} for
simultaneous variable selection and parameter estimation with the uncensored data
$\*Y$ and $\*X$, where {$\*Y=(Y_1,...,Y_n)^\top$}.

Following the notations of \citet{DAI2018334}, the BAR estimator of $\bbeta$ based on  $\*Y$ and $\*X$ is a surrogate $L_0$-penalized estimator defined as the limit of
the following iteratively reweighted ridge regression algorithm:
\begin{eqnarray}
\nonumber
{\bbeta}^{(k)}&=&\arg\min_{\bm \beta} \{ \|\*Y-\*X\bm\beta\|^2+\lambda_n  \sum_{j=1}^{p_n}\frac{\beta_j^2}
{\{{\bbeta}^{(k-1)}_j\}^2}\} \\
&=&\{\*X^\top\*X+\lambda_n \*D({\bbeta}^{(k-1)})\}^{-1}\*X^\top\*Y, \quad k=1,2,...
\end{eqnarray}
where ${\bm \beta}^{(0)}=\arg\min_{\bm \beta} \{ \|\*Y-\*X\bm\beta\|^2+\xi_n  \sum_{j=1}^{p_n} \beta_j^2\}
=(\*X^\top\*X+\xi_n\*I)^{-1}\*X^\top\*Y $ is an initial ridge estimator, $\xi_n>0$ and $ \lambda_n\ge 0$ are tuning penalization parameters, and
 for any $p_n$-dimensional vector $\btheta = (\theta_1, ..., \theta_{p_n})^T$, $\*D(\btheta)=\mbox{diag}(\frac{1}{\theta_1^2}, ..., \frac{1}{\theta_{p_n}^2})$. Note that
each reweighted $L_2$    {penalty} can be regarded as an adaptive surrogate $L_0$ penalty and the approximation of $L_0$ penalization improves with each iteration.
\citet{DAI2018334} showed that  the BAR estimator
$\hat{\bbeta} = \lim_{k\to\infty} {\bbeta}^{(k)}$ is selection consistent and possesses an oracle property: if the true model is sparse with some zero coefficients, then with probability tending to 1,
BAR estimates the true zero coefficients as zeros and estimates the
non-zero coefficients as well as the scenario when the true sub-model is known in advance.

\subsection{Broken adaptive ridge estimator for censored data (CBAR)}
For right censored data, the above BAR algorithm is obviously not applicable since one only observes $(T_i, \delta_i)$  instead of $Y_i$. To overcome the problem, we propose to adopt the  \citet{Leurgans1987} synthetic data approach for censored linear regression to variable selection by first transforming $(T_i, \delta_i)$ into a synthetic variable $Y_i^*$ and then applying the BAR method to the synthetic data
variable $Y_i^*$. Specifically,
the \citet{Leurgans1987} synthetic data $Y_i^*$ is defined as
\begin{equation}
\label{model:Leurgans}
Y_i^*=\int_{-\infty }^{T^n} {\left(\frac{I(T_i\geq s)}{1-\hat{H}(s)}- I(s\textless{0})\right)}ds,
\end{equation}
where $T^n =\max\{T_1, ..., T_n\}$ and $\hat{H}$ is the Kaplan-Meier estimator of $H$.
To apply the BAR method to synthetic data  $Y_i^*$, let $\*Y^*= (Y^*_1,...,Y^*_n)^\top$ and  define  an initial ridge estimator
\begin{eqnarray}
\hat{\bm \beta}^{(0)}
=(\*X^\top\*X+\xi_n\*I)^{-1}\*X^\top\*Y^*,
\label{ridge}
\end{eqnarray}
and then, for $k\ge 1$,
\begin{equation}
\hat{\bbeta}^{(k)}=g(\hat{\bbeta}^{(k-1)}),
\end{equation}
where
\begin{eqnarray}
\label{eq:one}
g(\tilde{\bm\beta}) =\arg\min_{\bm \beta} \{ \|\*Y^*-\*X\bm\beta\|^2+\lambda_n  \sum_{j=1}^{p_n}\frac{\beta_j^2}{\tilde{\beta}_j^2}\}
=\{\*X^\top\*X+\lambda_n \*D(\tilde{\bm\beta})\}^{-1}\*X^\top\*Y^*.
\label{eq:onea}
\end{eqnarray}
Finally, the CBAR estimator is defined as
\begin{equation}
\label{BJ BAR}
\hat{\bbeta}^*=\lim_{k\rightarrow\infty} \hat{\bbeta}^{(k)}.
\end{equation}
In the next section, we give conditions under which the CBAR estimator $\hat{\bm\beta}^*$ is selection consistent and has
an oracle property for estimation of the nonzero component $ \bbeta_{01}$ of $ \bbeta$.

\subsection{Large sample properties of CBAR}
Similar to \citet{Zhou1992Asymptotic}, define
$F_i{(t)}=P\{Y_i \geq t\}$, $ G_{i}(t)=P\{T_i \geq t\}=F_i{(t)}(1-H(t))$, $ K(t)= -\int_{0}^{t}{\frac{1}{\lim{(1/n)}\sum{F_i}}}\frac{dG}{G^2}$ and denote
$$\Lambda_i^{+}{(t)}=-\int_{0}^{t}{\frac{dG_i{(s)}}{G_i{(s^-)}}},\ \Lambda_i^D{(t)}=-\int_{0}^{t}{\frac{dF_i{(s)}}{F_i{(s^-)}}},\ \Lambda^C{(t)}=\int_{0}^{t}{\frac{dH{(s)}}{1-H{(s^-)}}}.$$
Then,
$$M_{i}^{+}(t)=I_{[T_i\leq t]}-\int_{0}^{t}{I_{[T_i\geq s]}d\Lambda_i^{+}{(s)}},$$
$$M_{i}^{D}(t)=I_{[T_i\leq t; \delta_i=1]}-\int_{0}^{t}{I_{[T_i\geq s]}d\Lambda_i^{D}{(s)}},$$
$$M_{i}^{C}(t)=I_{[T_i\leq t; \delta_i=0]}-\int_{0}^{t}{I_{[T_i\geq s]}d\Lambda_i^{C}{(s)}}$$
are square-integrable martingales and satisfies
$M_{i}^{+}=M_{i}^{D}+M_{i}^{C}$
\citep{Zhou1992Asymptotic}.
Let $\*\Omega(\tau)=( \sigma_{kl}(\tau))$ be defined by
\begin{eqnarray}
\label{sigmma_leurgans}
  \nonumber \sigma_{kl}(\tau)&=&\lim n \sum_{i=1}^n {\omega_{ki}\omega_{li}}\int_{0}^{\tau}{\left [\int_{t}^{\tau}{F_i}{ds} \right ]^2{\frac{d\Lambda_i^D{(t)}}{G_i}}}\\
  &+& \lim n\sum_{i=1}^n{\int_{0}^{\tau}{\prod_{c_{i}=\omega_{ki},\omega_{li}}{\left [ \frac{\sum{c_{j}}\int_{t}^{\tau}{F_j{ds}}}{(1-H)\sum{F_j}}-\frac{c_{k}{\int_{t}^{\tau}{F_i{ds}}}}{G_i} \right ]}G_i}{d\Lambda^C}},
\end{eqnarray}
where $\omega_{ji}=((\*X^\top \*X)^{-1}\*X^\top )_{ji}$.
Let {$\bm\omega_{i}$ denote the $i$th column of the matrix $(\*X^\top \*X)^{-1}\*X^\top$,}
$\*X_1$ denote the first $q_n$ columns of $\*X$,  $ \*\Sigma_{n}=n^{-1} \*X^\top \*X$ and $\*\Sigma_{n1}=n^{-1} \*X_1^\top\*X_1$. Write
$ \hat{\bm\beta}^*=( \hat{\bm\beta}_{1}^{*^\top }, \hat{\bm\beta}_{2}^{*^\top })^\top$, where $\hat{\bm\beta}_{1}^{*}$ is a ${q}\times 1$ vector estimator of $ \bbeta_{01}$ and $\hat{\bm\beta}_{2}^{*}$  is a $(p_n-{q})\times 1$ vector estimator of $ \bbeta_{02}$.

\par The following conditions are needed for our theoretical derivations.
\begin{itemize}
\itemsep0em
\itemsep0em
  \item[(C1)] 
 $\sup_{t} E(\varepsilon_i-t |\varepsilon_i> t )<\infty$, and  {\color{black}for any $p_n$-vector $ \*b_n$ satisfying {$\| \*b_n\|\leq 1$}, $\*b_n^\top\*\Omega(\tau)\*b_n$ }is finite for $\tau\in[K,\infty]$ and {$\*b_n^\top\*\Omega(\tau)\*b_n \to \*b_n^\top\*\Omega(\infty)\*b_n$ } as $\tau\to \infty$.
  \item[(C2)]
   $\sup_n \int_{0}^{\infty}\sum_{i=1}^n {(\*b_n^\top\bm\omega_{i} )}^2 \sum_{i=1}^n  {F_{i}^2{dK(t)}}< \infty$ for {any $p_n$-vector $ \*b_n$ satisfying {$\| \*b_n\|\leq 1$}}.
   {$X_i$ are bounded, and for some constants $C^*>0$ and $S<1$, $C^*F_i(t)^S\leq 1-H(t)$.}
   \item[(C3)]
   $\int_{0}^{\infty}\{{\frac{n\sum ({\*b_n^\top\bm\omega_{i} })^{2}F_{i}}{1-H(s)}\}^{\frac{1}{2}}ds}\leq M < \infty$ and $\int_{0}^{\infty}{K^{1/2}(t)|\sum {\*b_n^\top\bm\omega_{i} }F_{i}|dt} <\infty$ for {any $p_n$-vector $ \*b_n$ satisfying {$\| \*b_n\|\leq 1$}}.
  \item[(C4)]
  There exists a constant $\tilde{C}>1$ such that $0<1/ \tilde{C}<\lambda_{\min}( \*\Sigma_n)\leq \lambda_{\max}(\*\Sigma_n)< \tilde{C} <\infty$ for every integer $n$.
  \item[(C5)]
  Let
  $a_{{{0}}}= \min_{1\leq j\leq {q}}|\beta_{0j}|$ and $a_{{{1}}}= \max_{1\leq j\leq {q}}|\beta_{0j}|$.
  As $n\to\infty$, ${{{p_n}}}/{\sqrt{n}} \to  0$, { ${\xi_n}/\sqrt{n} \to  0$} and {$\lambda_n/\sqrt{n} \to  0$}
\end{itemize}
Conditions (C1)-(C3) are regularity conditions required to establish the asymptotic properties of the unpenalized synthetic data least squares estimator under diverging dimension.
Conditions (C4) and (C5) are additional conditions needed to derive the selection consistency and oracle property of the synthetic data BAR estimator of this paper as stated in Theorem 1 below
\begin{theorem}[Oracle property]
\label{theorem:CBAR}
Assume conditions (C1)-(C5) hold.  For any $q$-dimensional vector $ {\*c}$ satisfying $\| {\*c}\|\leq 1$, define ${z^2= \*c^\top \*\Omega_{1} \*c}$, where $\*\Omega_{1}$ is the first ${q \times q}$ sub-matrix of $\*\Omega(\infty)$.
Define {$f( \bm\alpha)= \{ \*X_{1} ^\top   \*X_1+\lambda_n \*D_1( \bm\alpha)\}^{-1} \*X_1^\top \*Y^*$}, where $ \*D_1( \bm\alpha) = {\rm diag}(\alpha_1^{-2},\ldots,\alpha_{{q}}^{-2})$. Then, with probability tending to 1,
\begin{enumerate}
\item[(i)]
${{\widehat{\bbeta }}^{*}}={{(\widehat{\bbeta }_{1}^{{{*}^{\top }}},\widehat{\bbeta }_{2}^{{{*}^{\top }}})}^{\top}}$ exists and is unique, with $\hat{\bm\beta}_{2}^{*}=0$ and $\hat{ \bbeta}_1^*$ being the unique fixed point of $f(\bm\alpha)$;
\item[(ii)]
 $\sqrt{n}\, {z^{-1} \*c}^\top(\hat{ \bbeta}_{1}^{*}- \bbeta_{01})
\rightarrow_D N(0,1).$
\end{enumerate}
\end{theorem}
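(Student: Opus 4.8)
\emph{Proof strategy.} The plan is to follow the standard route for broken adaptive ridge estimators, adapted to the \citet{Leurgans1987} synthetic data. First I would pin down the large-sample behaviour of the unpenalized synthetic-data least squares estimator; then I would analyze the iteratively reweighted ridge map, showing that it drives the zero block of the iterates to $\bm 0$ superlinearly while contracting the nonzero block to a unique fixed point, which gives part (i); finally part (ii) follows by expanding that fixed point around the oracle least squares estimator.

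\emph{Step 1: the synthetic-data least squares estimator.} Using the martingale decomposition $M_i^+ = M_i^D + M_i^C$ together with (C1)--(C3), I would show that $\tilde{\bm\beta}^{\mathrm{LS}} := (\*X^\top\*X)^{-1}\*X^\top\*Y^*$ satisfies $\*b_n^\top(\tilde{\bm\beta}^{\mathrm{LS}} - \bm\beta_0) = O_p(n^{-1/2})$ for every $p_n$-vector $\*b_n$ with $\|\*b_n\|\le 1$, hence $\|\tilde{\bm\beta}^{\mathrm{LS}} - \bm\beta_0\| = O_p(\sqrt{p_n/n})$, and that the oracle estimator $\tilde{\bm\beta}_1^{o} := (\*X_1^\top\*X_1)^{-1}\*X_1^\top\*Y^*$ based on the true reduced model satisfies $\sqrt n\, z^{-1}\*c^\top(\tilde{\bm\beta}_1^{o} - \bm\beta_{01}) \convindist N(0,1)$ with $z^2 = \*c^\top\*\Omega_1\*c$. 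The mechanics are the \citet{Leurgans1987}/\citet{Zhou1992Asymptotic} synthetic-data theory carried over to a diverging number of covariates: one writes $\*Y^* - \*X\bm\beta_0$, after controlling the error from replacing $\hat H$ by $H$, as a sum of stochastic integrals against the $M_i^D$ and $M_i^C$; the predictable variations are bounded uniformly in $\|\*b_n\|\le 1$ by (C2)--(C3), (C1) supplies the moment control, and Lindeberg--Feller yields the limit with variance governed by $\*\Omega(\infty)$. Since $\xi_n/\sqrt n \to 0$ and the eigenvalues of $\*\Sigma_n$ are bounded away from $0$ and $\infty$ by (C4), the initial ridge estimator $\hat{\bm\beta}^{(0)}$ inherits the rate $O_p(\sqrt{p_n/n})$.

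\emph{Step 2: the iteration map.} Write $g(\tilde{\bm\beta}) = \{\*X^\top\*X + \lambda_n\*D(\tilde{\bm\beta})\}^{-1}\*X^\top\*Y^*$ and partition according to the true support. By the block-inverse formula and (C4), whenever the support coordinates of $\tilde{\bm\beta}$ stay $O_p(1)$ and bounded away from $0$, the entries $\lambda_n/\tilde\beta_j^2$ of $\*D$ over the zero block blow up, forcing the corresponding block of $\{\*X^\top\*X + \lambda_n\*D(\tilde{\bm\beta})\}^{-1}$ and its couplings to the support block to vanish; consequently $|g_j(\tilde{\bm\beta})| = o(|\tilde\beta_j|)$ for $j$ in the zero block, and $g$ restricted to the support block is uniformly close to $f$. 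Starting the iteration from $\hat{\bm\beta}^{(0)}$, the zero block therefore contracts superlinearly to exactly $\bm 0$. For the support block I would verify that $f$ maps a shrinking ball $B_n = \{\bm\alpha : \|\bm\alpha - \bm\beta_{01}\| \le d_n\}$, $d_n \to 0$, into itself and is a strict contraction there: on $B_n$ the Jacobian of $f$ has spectral norm of order $\lambda_n/(n\, a_0^{3})$, which tends to $0$ by (C5) and $a_0 > 0$, while $\|f(\bm\beta_{01}) - \bm\beta_{01}\| \le \|\tilde{\bm\beta}_1^{o} - \bm\beta_{01}\| + O_p(\lambda_n a_0^{-2}/n) = o_p(1)$. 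Banach's fixed-point theorem then gives a unique fixed point $\hat{\bm\beta}_1^*$ of $f$ in $B_n$; combined with the zero-block analysis and a short contradiction argument ruling out fixed points of $g$ with nonzero entries outside the support, the iterates $\hat{\bm\beta}^{(k)}$ converge to $(\hat{\bm\beta}_1^{*\top}, \bm 0^\top)^\top$, which is then the unique such limit. This proves (i).

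\emph{Step 3: asymptotic normality and the main obstacle.} Since $\hat{\bm\beta}_1^*$ solves $\{\*X_1^\top\*X_1 + \lambda_n\*D_1(\hat{\bm\beta}_1^*)\}\hat{\bm\beta}_1^* = \*X_1^\top\*Y^*$, we get $\hat{\bm\beta}_1^* - \tilde{\bm\beta}_1^{o} = -\lambda_n(\*X_1^\top\*X_1)^{-1}\*D_1(\hat{\bm\beta}_1^*)\hat{\bm\beta}_1^*$, whose norm is $O_p(\lambda_n a_0^{-2}/n)$ by (C4) and the consistency of $\hat{\bm\beta}_1^*$; multiplying by $\sqrt n$ gives $O_p(\lambda_n/\sqrt n) = o_p(1)$ by (C5). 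Combining with Step 1 and Slutsky's theorem yields $\sqrt n\, z^{-1}\*c^\top(\hat{\bm\beta}_1^* - \bm\beta_{01}) \convindist N(0,1)$, i.e. (ii). I expect the main obstacle to be Step 1 together with the zero-block part of Step 2 in the diverging-dimension regime: obtaining the synthetic-data central limit theorem with the precise asymptotic variance $\*\Omega(\infty)$ uniformly over $\|\*b_n\|\le 1$ while $p_n \to \infty$ (and uniformly controlling the $\hat H$-to-$H$ replacement), and---crucially---bounding the per-iteration contraction factor of each zero coordinate by a quantity that is $o(1)$ \emph{uniformly} over the shrinking neighbourhood occupied by the iterates, so that the zero block is driven to exactly $\bm 0$ rather than merely to something small. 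Both rest on the uniform control of $\{\*X^\top\*X + \lambda_n\*D(\tilde{\bm\beta})\}^{-1}$ provided by (C2) and (C4) and on the rate conditions in (C5).
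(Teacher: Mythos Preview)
Your proposal is correct and takes essentially the same route as the paper: two lemmas establishing (a) that $g$ maps a shrinking neighborhood of $\bm\beta_0$ into itself with the zero block contracting geometrically, and (b) that $f$ is a contraction on a neighborhood of $\bm\beta_{01}$ whose unique fixed point satisfies the stated asymptotic normality, followed by a short argument linking the $g$-iterates to the $f$-iterates as the zero block vanishes. The one technical device worth making explicit---since you correctly flag the uniform zero-block contraction as the main obstacle---is the paper's auxiliary sequence $\delta_n\to\infty$ with $p_n\delta_n^2/\lambda_n\to 0$, used to size the neighborhood $\{\bm\beta:\|\bm\beta-\bm\beta_0\|\le\delta_n\sqrt{p_n/n}\}$; this calibration is exactly what forces the ratio $\|\bm\gamma^*\|/\|\bm\gamma\|$ below a fixed constant $1/C_0<1$ uniformly over that set (the paper proves only this geometric rate, not the superlinear rate you mention), and it resolves the obstacle you identify.
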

Part (i)  of the above theorem guarantees  that the CBAR estimator is consistent for variable selection. Part (ii) states that the asymptotic distribution of the nonzero component of the CBAR estimator is the same as the one when the true model is known in advance. The proof of Theorem \ref{theorem:CBAR} is deferred to the Appendix.
{
\subsection{Grouping effect}\label{group_simulation}
When the true model has a group structure, it would be desirable for a variable selection method to either retain or drop all variables that are clustered within the same group.
 Below we establish that the CBAR estimator possesses a grouping property in the sense that  highly correlated covariates tend to be grouped together with similar coefficients.
\begin{theorem}
	\label{theorem:2}
	{\it
Assume that the columns of matrix $\*X$ are standardized and $\*Y^*$ is centered.
Let $\hat{ \bbeta}^*$ be the CBAR estimator and {$\hat{\beta}_i^*\hat{\beta}^*_j > 0$}, then, with probability tending to $1$,
\begin{equation}
\label{groupEquation1}
		|\hat{\beta}_i^{*-1} - \hat{\beta}_j^{*-1}|\leq \frac{1}{\lambda_n}\, \| \*Y^*\|\sqrt{2(1- r_{ij})},
\end{equation}	
where ${r_{ij}= \*x_i^\top \*x_j}$ is the sample correlation of $\*x_i$ and $ \*x_j$.}
\end{theorem}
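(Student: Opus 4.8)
The plan is to derive, from the fixed-point characterization of the CBAR estimator, one stationarity equation per nonzero coordinate, and then to subtract two such equations and apply the Cauchy--Schwarz inequality. First I would invoke part~(i) of Theorem~\ref{theorem:CBAR}: with probability tending to one, $\hat{\bm\beta}^*$ exists, is unique, satisfies $\hat{\bm\beta}_2^*=\*0$, and its nonzero block $\hat{\bm\beta}_1^*$ is the fixed point of the map $f$; the whole argument then takes place on this event. Rewriting $\hat{\bm\beta}_1^*=f(\hat{\bm\beta}_1^*)$ as $\{\*X_1^\top\*X_1+\lambda_n\*D_1(\hat{\bm\beta}_1^*)\}\hat{\bm\beta}_1^*=\*X_1^\top\*Y^*$, reading off its $j$th row, and using $\*X_1\hat{\bm\beta}_1^*=\*X\hat{\bm\beta}^*$ (which holds because $\hat{\bm\beta}_2^*=\*0$), I obtain, for every index $j$ in the support,
\begin{equation}
\label{group:normaleq}
\frac{\lambda_n}{\hat{\beta}_j^*}=\*x_j^\top\bigl(\*Y^*-\*X\hat{\bm\beta}^*\bigr)=:\*x_j^\top\hat{\*r}.
\end{equation}
Since $\hat{\beta}_i^*\hat{\beta}_j^*>0$ implies $\hat{\beta}_i^*\neq 0\neq\hat{\beta}_j^*$, both $i$ and $j$ belong to the support, so (\ref{group:normaleq}) is available for each.

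Next I would subtract (\ref{group:normaleq}) for index $i$ from the one for index $j$, obtaining $\lambda_n(\hat{\beta}_i^{*-1}-\hat{\beta}_j^{*-1})=(\*x_i-\*x_j)^\top\hat{\*r}$, and bound the right-hand side by Cauchy--Schwarz: $\lambda_n\,|\hat{\beta}_i^{*-1}-\hat{\beta}_j^{*-1}|\le\|\*x_i-\*x_j\|\,\|\hat{\*r}\|$. Standardization of the columns of $\*X$ gives $\|\*x_i-\*x_j\|^2=\|\*x_i\|^2+\|\*x_j\|^2-2\*x_i^\top\*x_j=2(1-r_{ij})$, hence $\|\*x_i-\*x_j\|=\sqrt{2(1-r_{ij})}$. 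The last ingredient is $\|\hat{\*r}\|\le\|\*Y^*\|$, which I would get from the minimization property of the fixed point: since $\hat{\bm\beta}_1^*=\{\*X_1^\top\*X_1+\lambda_n\*D_1(\hat{\bm\beta}_1^*)\}^{-1}\*X_1^\top\*Y^*$ minimizes $\|\*Y^*-\*X_1\bm\beta_1\|^2+\lambda_n\sum_{j=1}^{q}\beta_j^2/(\hat{\beta}_j^*)^2$ over $\bm\beta_1$, comparing its objective value with the value at $\bm\beta_1=\*0$ yields $\|\hat{\*r}\|^2+\lambda_n q\le\|\*Y^*\|^2$, so $\|\hat{\*r}\|\le\|\*Y^*\|$. (Equivalently, $\hat{\*r}=\{\*I-\*X_1\*M^{-1}\*X_1^\top\}\*Y^*$ with $\*M=\*X_1^\top\*X_1+\lambda_n\*D_1(\hat{\bm\beta}_1^*)\succeq\*X_1^\top\*X_1$, whence $\*0\preceq\*X_1\*M^{-1}\*X_1^\top\preceq\*I$ and the residual operator is a contraction.) Chaining the three estimates produces $|\hat{\beta}_i^{*-1}-\hat{\beta}_j^{*-1}|\le\lambda_n^{-1}\|\*Y^*\|\sqrt{2(1-r_{ij})}$, i.e.\ (\ref{groupEquation1}).

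The hard part will be justifying the passage to the normal equation (\ref{group:normaleq}). The reweighted ridge map $g$ is discontinuous wherever a coordinate vanishes, so one cannot simply pass to the limit in the recursion $\hat{\bm\beta}^{(k)}=g(\hat{\bm\beta}^{(k-1)})$; it is precisely part~(i) of Theorem~\ref{theorem:CBAR} that both legitimizes treating $\hat{\bm\beta}_1^*$ as a genuine fixed point of the reduced map $f$ and pins down the zero/nonzero pattern that makes the reciprocals in (\ref{group:normaleq}) and in the conclusion well defined. Everything after that is elementary linear algebra together with a single use of Cauchy--Schwarz.
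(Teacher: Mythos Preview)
Your proposal is correct and follows essentially the same line as the paper: derive the stationarity equation $\lambda_n/\hat{\beta}_\ell^{*}=\*x_\ell^\top(\*Y^*-\*X\hat{\bm\beta}^*)$ for each nonzero coordinate, subtract, apply Cauchy--Schwarz, and bound the residual norm by $\|\*Y^*\|$ via comparison with $\bm\beta=\*0$. The only cosmetic difference is that the paper writes the normal equations at each iterate $\hat{\bm\beta}^{(k+1)}$ and then lets $k\to\infty$, whereas you work directly at the fixed point via Theorem~\ref{theorem:CBAR}(i); the latter is arguably cleaner since it makes the justification of the limit explicit.
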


The above result implies that the estimated coefficients
of two highly positively-correlated variables will be similar in magnitude.
The proof of Theorem \ref{theorem:2} is given in the Appendix. Similarly, it can be shown
that the estimated coefficients
of two highly negatively-correlated variables will also be similar in magnitude.
}

{
\subsection{Ultrahigh dimensional covariates}\label{highdim}
 Theorem 1  is established under a sufficient condition that $p_n < n$. In many applications,  $p_n$ can be much larger than the sample size $n$. For high dimensional problems, a common strategy is to proceed a variable selection method with a sure screening dimension reduction step  \citep{Fan2008Sure,Zhu2011Model,Cui2015Model}. This strategy also applies to the semiparametric AFT model with right censored data. For  example, one can first apply the sure joint screening method BJASS of \citet{YiLiu} to obtain a lower dimensional model and then apply the CBAR method to the reduced model. We refer to the resulting two-step estimator $\hat{ \bbeta}^{*}$ as the BJASS-CBAR estimator.

Below we give some additional sufficient conditions under which the BJASS-CBAR estimator $\hat{ \bbeta}^{*}$ has an oracle property.
}
{
\begin{itemize}
 \item [(D1)] $\log(p)=O(n^d)$ for some $0\leq d < 1$.

 \item [(D2)] $P(t\leq Y_i \leq C_i)\geq \tau_0>0$ for some positive constant $\tau_0$ and any $t\in [0,\varsigma]$, where $\varsigma$ denotes the maximum follow up time. Furthermore, $\sup\{t:P(Y>t)>0\}\geq \sup\{t:P(C>t)>0\} $. $H(t)$ has uniformly bounded first derivative.
  \item [(D3)]  $\min_{j\in s^*}|\beta_j^*|\geq \omega_1n^{-\tau_1}$  and $q<k\leq \omega_2n^{\tau_2}$ for some positive constants $\omega_1,~\omega_2$ and nonnegative constants $\tau_1,~\tau_2$ satisfying $\tau_1+\tau_2<1/3$, where $k$ is the size of the screened model from BJASS.

    \item [(D4)] For sufficiently large $n$, $\lambda_{\min}(n^{-1} \*X_s ^\top \*X_s)\geq c_1$ for some constant $c_1>0$ and all $s\in S_{+}^{2k}$, where $\lambda_{\min}(\cdot)$ denotes the smallest eigenvalue of a matrix, and $S_{+}^{k}=\{s:s^*\subset s; \|s\|_0\leq k\}$ denotes the collection of the over-fitted models of cardinality $k$ or smaller.
 \item [(D5)] Let $\sigma_i^2=\int\int[\frac{G_i(s\vee t)}{(1-H(s))(1-H(t))}-F_i(s)I(t<0)-F_i(t)I(s<0)+I(s<0)I(t<0)]dsdt-E^2(Y_i)$. There exist positive constants $c_2$, $c_3$, $c_4$, $\sigma$ such that $|X_{ij}|\leq c_2$, $|X_i^\top\bbeta^*|\leq c_3$, $|\sigma_i|\leq \sigma$ and for sufficiently large $n$,
     \begin{eqnarray*}
     \max_{1\leq j\leq p}\max_{1\leq i \leq n}\left\{\frac{X_{ij}^2}{\sum_{i=1}^nX_{ij}^2\sigma_i^2}\right\}\leq c_4n^{-1}.
     \end{eqnarray*}
 \item [(D6)] There are positive constants $K_1$, $K_2$ and $\tau_3$ such that
     \begin{eqnarray*}
     P(|\epsilon|\geq M)\leq K_1\exp(-K_2M^{\tau_3}),
     \end{eqnarray*}
  for any $M=O(n^{\tau})>0$, where $\tau\geq 0$, $\tau_1+\tau_2+\tau<(1-d)/2$, $\tau_2+d-\tau\tau_3<0$, and $2\tau_2+2\tau+d<1/3$.
\end{itemize}

\begin{theorem}[Oracle property of the BJASS-CBAR estimator]\label{BJASS_CBAR}
Assume that conditions (D1)-(D6) hold and that the assumptions of Theorem 1 hold for the BJASS reduced model. Then, with probability tending to 1,
\begin{enumerate}
\item[(i)] $\hat{\bm\beta}_{2}^{*}=0$;
\item[(ii)]
 $\hat{ \bbeta}_{1}^{*}$ performs as well as the oracle estimator for the true model $\mathcal{M}_{*}=\{1\le j\le q\}$ in the sense of part (ii) of Theorem 1.
\end{enumerate}
\end{theorem}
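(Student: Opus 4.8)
The plan is to decouple the screening stage from the penalization stage: use the sure screening property of BJASS to reduce the ultrahigh-dimensional problem to an ordinary CBAR fit on an over-fitted model of size $k$, and then invoke Theorem~\ref{theorem:CBAR}. First I would record the sure joint screening property of BJASS. Conditions (D1)--(D6) are essentially the regularity, censoring, signal-strength and sub-exponential-tail assumptions used by \citet{YiLiu}: (D1)--(D2) bound the growth of $p$, control the amount of censoring near the end of the support, and impose the mild smoothness of $H$ needed for the rate bounds; (D3) ensures the true signals exceed the detection threshold while keeping the screened size $k=O(n^{\tau_2})$ under control; and (D5)--(D6) supply the Bernstein-type concentration inequalities, built on the per-observation variances $\sigma_i^2$ and the sub-exponential error tail, that make the joint screening utility separate the active from the inactive covariates. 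It follows that the screened index set $\hat s$, of cardinality $k$, satisfies $P(A_n)\to1$, where $A_n=\{\mathcal{M}_{*}\subset\hat s\}$.

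Next I would show that on $A_n$ the problem collapses to Theorem~\ref{theorem:CBAR}. On $A_n$, $\hat{\bm\beta}^{*}$ is zero outside $\hat s$ by construction, and restricted to $\hat s$ it is exactly the CBAR estimator obtained from the reduced design $\*X_{\hat s}$ and the synthetic response $\*Y^{*}$; since $\mathcal{M}_{*}\subset\hat s$, this is an over-fitted model whose true submodel is $\mathcal{M}_{*}$, i.e. $\hat s\in S_{+}^{k}$. Given that the hypotheses of Theorem~\ref{theorem:CBAR} are imposed on the reduced model --- with (D4) supplying the uniform eigenvalue lower bound $\lambda_{\min}(n^{-1}\*X_s^\top\*X_s)\ge c_1$ over $s\in S_{+}^{2k}$, and $k/\sqrt n\to0$ because $k\le\omega_2 n^{\tau_2}$ with $\tau_2<1/3$ --- Theorem~\ref{theorem:CBAR} applies to any fixed over-fitted model. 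The genuine difficulty, which I expect to be the main part of the work, is that Theorem~\ref{theorem:CBAR} is stated for a fixed design whereas $\hat s$ is data-dependent; the remedy is to upgrade it to a statement holding \emph{uniformly} over the growing collection $S_{+}^{k}$, and this is precisely where the uniform eigenvalue bound (D4) over $S_{+}^{2k}$, the exponential error tail (D6), and the growth restriction $\log p=O(n^{d})$ with $d<1$ are used --- namely to absorb, via Bernstein and union bounds, the at most $p^{k}$ (up to polynomial factors) models of $S_{+}^{k}$. Granting this uniform version, with probability tending to one and simultaneously for every $s\in S_{+}^{k}$ the CBAR fit on $\*X_s$ estimates the coordinates indexed by $s\setminus\mathcal{M}_{*}$ as exactly zero and the $\mathcal{M}_{*}$-block as the unique fixed point of the map $f$ of Theorem~\ref{theorem:CBAR}, which then satisfies $\sqrt n\,z^{-1}\*c^\top(\hat{\bbeta}_{1}^{*}-\bbeta_{01})\to_D N(0,1)$; since $f$ involves only the columns $\*X_1$ and $\*Y^{*}$, this block and its limiting law do not depend on which over-fitted model was screened, so the conclusion is exactly the one claimed for the true model $\mathcal{M}_{*}$.

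Finally I would assemble the pieces. On $A_n$, $\hat{\bm\beta}^{*}$ is a CBAR fit on $\*X_{\hat s}$ with $\hat s\in S_{+}^{k}$, so by the uniform statement its coordinates in $\hat s\setminus\mathcal{M}_{*}$ are zero; combined with the coordinates outside $\hat s$ being zero by construction, this gives $\hat{\bm\beta}_{2}^{*}=0$ on $A_n$, and since $P(A_n)\to1$ part (i) follows. For part (ii), the asymptotic-normality statement of Theorem~\ref{theorem:CBAR}(ii) holds on $A_n$ for the $\mathcal{M}_{*}$-block of $\hat{\bm\beta}^{*}$, and since $P(A_n)\to1$ the same limit holds unconditionally, so $\hat{\bbeta}_{1}^{*}$ performs as well as the oracle estimator for $\mathcal{M}_{*}$ in the sense of part (ii) of Theorem~\ref{theorem:CBAR}. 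A secondary, more bookkeeping-type obstacle is to verify that (D1)--(D6) match the precise hypotheses of the BJASS sure-screening theorem of \citet{YiLiu} and that $\sigma_i^2$ in (D5) is indeed the per-observation variance governing the synthetic-data least squares estimator, so that the screening-stage concentration arguments go through directly.
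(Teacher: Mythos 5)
Your proposal takes essentially the same route as the paper: the paper offers no written proof of this theorem beyond the single remark that it is ``a direct consequence of Theorem 4 of \citet{YiLiu} and the oracle property of CBAR stated in Theorem 1,'' i.e.\ sure joint screening gives $P(\mathcal{M}_{*}\subset\hat s)\to1$ under (D1)--(D6), after which Theorem~\ref{theorem:CBAR} is applied to the reduced model. The uniformity-over-$S_{+}^{k}$ issue you correctly flag as the genuine difficulty is not addressed in the paper at all --- it is absorbed into the hypothesis ``the assumptions of Theorem 1 hold for the BJASS reduced model,'' so your treatment is, if anything, more careful than the paper's.
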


The above result is a direct consequence of  Theorems 4 of \citet{YiLiu} and the oracle property of CBAR stated in Theorem 1.
In Section \ref{sect:Simulation2}, we present a simulation study to illustrate the advantages of BJASS-BAR
in comparison with some other penalization methods under a high dimensional setting.
}

\section{Simulations}\label{Numerical}
We present some simulations to illustrate
the effectiveness of the proposed CBAR estimator for variable selection, prediction, parameter estimation 
in comparison with some popular penalization methods including
Lasso \citep{tib1996}, adaptive Lasso \citep{zou2006}, SCAD \citep{fan2001} and MCP \citep{zhang2010}),
 in the context of the \citet{Leurgans1987} synthetic data framework. We use the  \textsf{R} package \texttt{glmnet} \citep{friedman2010} for Lasso and adaptive Lasso and \textsf{R} package
\texttt{ncvreg} \citep{breheny2011} for SCAD and MCP, performed on the \citet{Leurgans1987} synthetic data
outcome. Five-fold cross-validation (CV) is used to select tuning parameters for all methods.
For CBAR, we all use 10 equally log-spaced grid points on $[a, b]$ for the paths of $\lambda_n$ and $\xi_n$ where $a=1e^{-4}$ and $b=\mbox{max}\left\{\frac{(\*x^{\T}_j\*y)^2}{4\*x^{\T}_j\*x_j}\right\}^p_{j=1}$.

\subsection{Simulation 1:  $p_n < n$ }\label{sect:Simulations}

We consider the following two model settings
similar to  \citep{Tibshirani1997, Fan2002, Cai2009Regularized}:
\leftmargini=2.5cm
\begin{itemize}
\item[Model 1:] $Y_i= \*x_i^\top \bm\beta_0+\varepsilon_i$, where
the covariate vector $ \*x_i$ is generated from a multivariate normal distribution with mean $0$ and variance-covariance matrix $\*\Sigma=(\rho^{|i-j|})$, and the error $\varepsilon_i$ has the standard normal distribution and is independent of the covariates.\\
The true parameter value is $\bm\beta_0 = (3,-2, 0, 0, 6, 0,\ldots, 0)^\top$.
\item[Model 2:]
 The same as Model 1 except that\\
$\bm\beta_0 = (3,-2, 6, 0.3, -0.2, 0.6, 0, \ldots, 0)^\top $.
\end{itemize}
Note that Model 1 contains  strong signals, whereas Model 2 includes both strong and weak signals.
The censoring variable $C_i$ is generated from the normal distribution $ N(c, 2)$, where $c$ is chosen to yield a desired level of censoring rate.

The variable selection performance is assessed using five measures:
the mean number of misclassified non-zeros and zeros (MisC), mean of false non-zeros (FP), mean of false zeros (FN),
probability that the selected model is identical to the true model (TM), and a similarity measure (SM)
between the selected set $\hat S $ and the true active set $ |S|_{0}$:
$SM= \frac{ |\hat{S}\cap S|_{0}}{\sqrt{|\hat{S}|_{0} |S|_{0}}},$
where $|.|_{0}$ denotes model size.  The prediction performance is measured by
the mean squared prediction error (MSPE) from the five-fold CV.  The parameter
estimation performance is measured by the mean of the absolute bias of the parameter estimator (MAB). We have run extensive simulations for a variety of settings by varying $n$, $p$, $\rho$ and the censoring rate, with
{1,000} Monte Carlo replications for each setting. Part of the findings are presented in Table \ref{table:1}.

\begin{center}
[Insert Table \ref{table:1} approximately here]
\end{center}

\begin{table}
\centering
\caption{\small {Comparison of CBAR with Lasso, SCAD, MCP, and Adaptive Lasso (ALasso) when coupled with the \citet{Leurgans1987} synthetic data procedure based on {{1,000}} Monte-Carlo replications. Data settings: $n=100$, $p \in \{10, 50, 80, 90 \}$. (MisC = mean number of misclassified non-zeros and zeros; FP = mean of false positives (non-zeros); FN = mean of false negatives (zeros); TM = probability that the selected model is exactly the true model; SM = similarity measure; MSPE = mean squared prediction error from five-fold CV or five-jointly CV and MAB = mean of the absolute bias of the parameter estimator.)}}
\bigskip
\scalebox{0.8}{
\begingroup
\setlength{\tabcolsep}{5pt} 
\renewcommand{\arraystretch}{0.9} 
\begin{tabular}{clllllllll}
\hline\hline
 Model& p & Method & MisC & FP & FN & TM & SM & MSPE & MAB \\
\hline
  1& 10  & CBAR & $\bm{0.60}$ & $\bm{0.60}$ & 0 & $\bm{74\%}$ & $\bm{0.94}$ &$\bm{8.86}$  & 1.50 \\
  & & Lasso & 3.05 & 3.05 & 0 & 6.6$\%$ & 0.73 & 9.28 & 2.29\\
  & & SCAD & 1.11 & 1.11 & 0 & 46.2$\%$ &0.89 & 9.03 & 1.47 \\
  & & MCP & 0.76 & 0.76 & 0 & 63.6$\%$ & 0.92 & 9.01 & 1.45 \\
  & & Alasso & 1.12 &1.12 & 0 & 49.2$\%$ & 0.89  & 8.91 & 1.68 \\
  \cline{2-10}
  & 50  & CBAR & $\bm{0.73}$ &  $\bm{0.71}$ & 0.02 & $\bm{74.80\%}$ & $\bm{0.94}$ & $\bm{8.9}$  & 1.69\\
  & & Lasso & 7.33 & 7.33 & 0 & 1.7$\%$ &0.58 & 9.77 & 3.36 \\
  & & SCAD & 2.96 & 2.96 & 0 & 21.3$\%$ & 0.76  & 9.09 & 1.72\\
  & & MCP & 1.24 & 1.23 & 0.01 & 47.7$\%$ & 0.88 & 9.03 & 1.56 \\
  & & Alasso & 6.09 & 6.09 & 0 & 15.2$\%$ & 0.67  & 8.69& 3.04 \\
  \cline{2-10}
 & 80  & CBAR & $\bm{0.86}$ & $\bm{0.84}$  & 0.02 & $\bm{72.3\%}$ &$\bm{0.93}$ & 8.84 & 1.81 \\
  & & Lasso & 9.40 & 9.40 & 0 & 1.30$\%$ & 0.54 & 10.06 & 3.79\\
  & & SCAD & 3.90 & 3.90 & 0 &15.2$\%$ &0.72  & 9.33 & 1.89\\
  & & MCP & 1.41 & 1.40 & 0.01 & 45.6$\%$ & 0.87 & 9.26 & 1.65 \\
  & & Alasso & 11.09 & 11.08  & 0.01 & 11.8$\%$ & 0.59 & $\bm{8.74}$  & 4.51\\
  \cline{2-10}
  & 90  & CBAR &  $\bm{0.94}$ & $\bm{0.92}$ & 0.02 & $\bm{69.7\%}$ &$\bm{0.93}$ & $\bm{8.93}$   & 1.88 \\
  & & Lasso & 9.36 & 9.36 & 0 & 1.4$\%$ &0.54 & 10.05 & 3.82 \\
  & & SCAD & 4.09 & 4.09 & 0 & 13.5$\%$ & 0.71 & 9.27 & 1.91\\
  & & MCP & 1.44 & 1.43 & 0.01 & 43.2$\%$  &0.87 & 9.20 & 1.64 \\
  & & Alasso & 4.29 & 4.27 & 0.02 & 10.5$\%$ &0.70   &9.13 & 2.69 \\
  \hline
   2 & 10  & CBAR & 2.61  & $\bm{0.65}$ & 1.96 & 0.9$\%$ & 0.77 & 9.36 & 2.36  \\
  & & Lasso & 3.00 & 2.14 & $\bm{0.86 }$ & 2.2$\%$ &0.78 & 9.50 & 2.74 \\
  & & SCAD & 2.64 & 1.10 & 1.54 & 2.3$\%$ &  0.78  & 9.35 & 2.35\\
  & & MCP & 2.64 & 0.86 & 1.78 & 1.9$\%$ & 0.77 & 9.34 & 2.36\\
  & & Alasso & $\bm{2.50}$ & 0.92 & 1.58 & $\bm{3.1}\%$ & $\bm{0.79}$ & $\bm{9.14}$  & 2.37\\
   \cline{2-10}
 & 50  & CBAR & $\bm{3.65}$ & $\bm{1.03}$ & 2.62 & 0.1$\%$ & $\bm{0.69}$  & 9.41 & 2.92\\
  & & Lasso & 9.75 & 7.99 & $\bm{1.76}$ & 0$\%$ & 0.52 & 10.40 & 4.37 \\
  & & SCAD & 5.57 & 3.40 & 2.17 & 0$\%$ & 0.61 & 9.84 & 2.77 \\
  & & MCP & 3.92 & 1.46 & 2.46 & 0$\%$ &0.67  & 9.80 & 2.64  \\
  & & Alasso & 9.18 & 7.22 & 1.96 & 0.1$\%$  &0.55 & $\bm{9.21}$  & 4.27 \\
  \cline{2-10}
  & 80  & CBAR & $\bm{3.89}$ & $\bm{1.19}$ & 2.70 & 0$\%$ & $\bm{0.68}$ & 9.11 & 3.02 \\
  & & Lasso & 11.61 & 9.69 & $\bm{1.92}$  & 0$\%$ & 0.48 & 10.39 & 4.70 \\
  & & SCAD & 6.48 & 4.21 & 2.27 & 0$\%$ & 0.57 & 9.66 & 2.86 \\
  & & MCP & 4.06 &1.49 & 2.57 & 0$\%$ & 0.66  & 9.60 & 2.64 \\
  & & Alasso & 13.82 & 11.78 & 2.04 & 0$\%$ & 0.48 & $\bm{8.99}$  & 5.55\\
   \cline{2-10}
  & 90  & CBAR &$\bm{3.85}$ & $\bm{1.16}$ & 2.69 &0$\%$ & $\bm{0.68}$& 9.31 & 3.05 \\
  & & Lasso & 12.44 & 10.47 & $\bm{1.97}$  & 0$\%$ & 0.46 & 10.20 & 4.86 \\
  & & SCAD & 6.92 & 4.67 & 2.25 & 0$\%$ &0.56  & 9.40 & 2.92 \\
  & & MCP & 4.24 & 1.68 & 2.56 & 0$\%$ & 0.65  & 9.36& 2.68 \\
  & & Alasso & 7.00 & 4.50 & 2.50 & 0$\%$ & 0.54 & $\bm{9.27}$ & 3.73 \\
  \hline
\end{tabular}
\endgroup
}
\label{table:1}
\end{table}

It is seen from Table \ref{table:1} that CBAR stands out as the top or top two performers with respect to almost all variable selection performance measures (MisC, FP, TM and SM).
{In particular, CBAR generally yields a more sparse and accurate model with the largest TM and SM, and much lower MisC and FP}. Also, using fewer active features, CBAR achieves comparable  prediction accuracy as other methods that use more features. For estimation, CBAR, SCAD and MCP are comparable with similar bias (MAB), whereas Lasso and {Adaptive lasso} can be substantially worse. 

\subsection{Simulation 2:  $p_n >> n$ }\label{sect:Simulation2}
In this simulation, we consider the same models as in Simulation 1, except in a high dimensional setting with $ n=200$, $p=1000$.
We again compared the same five penalization methods, with each method proceeded with the sure joint screening method BJASS of \citet{YiLiu} with $k=2log(n)*n^{(1/4)}$ for the semi-parametric AFT model to yield a two-step sparse estimator. We denote these methods by BJASS-CBAR, BJASS-Lasso, BJASS-SCAD, BJASS-MCP and BJASS-ALasso.
The censoring rate is 0.2. The results are summarized in Table \ref{table:3}.
\begin{center}
[Insert Table \ref{table:3} approximately here]
\end{center}

\begin{table}[h!]
\centering
	\caption{\small
Comparison of BJASS-CBAR with CBAR with BJASS-Lasso, BJASS-SCAD, BJASS-MCP, and BJASS-ALasso when coupled with the \citet{Leurgans1987} synthetic data procedure in a high-dimensional setting: $ n=200$, $p=1000$. (MisC= mean number of misclassified non-zeros and zeros; FP = mean of false positives (non-zeros); FN = mean of false negatives (zeros); TM = probability that the selected model is exactly the true model; SM = similarity measures; MSPE = mean squared prediction error from five-fold CV or five-jointly CV and MAB = mean of the absolute bias of the parameter estimator.)}
\bigskip	
	\scalebox{0.8}{
	\begingroup
	\setlength{\tabcolsep}{4pt} 
	\renewcommand{\arraystretch}{1} 
	\begin{tabular}{clllllllr}
       \hline
	 Model & Method & MisC & FP & FN & TM & SM & MAB & MSPE\\
			\hline
	1 & {BJASS-CBAR} & $\bm{2.24 }$ & $\bm{2.15 }$ & 0.09 &$\bm{63\%}$ & $\bm{0.93}$ & 2.87 & 10.40  \\
      & {BJASS-Lasso} & 12.61  & 12.55  & 0.06  & 0$\%$ & 0.63   & 4.79  & 10.87 \\
      & {BJASS-SCAD}  & 4.23  & 4.14  & 0.09  & 20$\%$  & 0.82  & 2.79 & 10.46 \\
       & {BJASS-MCP}  & 2.82  & 2.73  & 0.09  & 43$\%$  & 0.88 & $\bm{2.69}$   & 10.45  \\

      & {BJASS-ALasso} & 8.08  & 8.00  & 0.08  & 12$\%$  & 0.73  & 4.05  & 10.35 \\
     \hline
    2 & {BJASS-CBAR}  & $\bm{6.15 }$  & $\bm{3.15}$   & 3  & $\bm{41\%}$ & $\bm{0.69}$ & 2.51  & 12.17  \\
        & {BJASS-Lasso} & 17.14 & 14.14 & 3  & 0$\%$  & 0.49  & 4.49  & 12.64\\
        & {BJASS-SCAD}  & 8.68  & 5.68  & 3  & 7$\%$  & 0.62 & 2.09  & 12.39 \\
        & {BJASS-MCP}   & 6.38  & 3.38 & 3 & 26$\%$  & 0.68 & $\bm{1.96}$ & 12.38 \\
       & {BJASS-ALasso} & 12.78  & 9.78  & 3  & 3$\%$ & 0.54 & 3.75  & 11.91  \\
          \hline
		\end{tabular}		
		\endgroup
	}\\
		\label{table:3}
\end{table}
{It is observed from  Table \ref{table:3} that although most penalization methods had comparable
performance in terms of estimation bias (MAB) and prediction error (MSPE), BJASS-CBAR outperformed the other methods in the variable selection domain with the lowest MisC, FP and the largest TM and SM, which are consistent with the simulation results for the low-dimension $p_n<n$ settings in Simulation 1.
}

\section{Real data examples}
We illustrate the CBAR method on two real  datasets with high dimensional covariates.

\subsection{Diffuse large-B-cell lymphoma data}
The diffuse large-B-cell lymphoma (DLBCL) data includes $n=240$ patients and $p=7399$ gene features, which was downloaded from \url{http://statweb.stanford.edu/~tibs/superpc/staudt.html}. We first apply the BJASS sure joint screening method  of \citet{YiLiu} to reduce data dimension to  $k=2log(n) n^{\frac{1}{4}}=43$ and then apply CBAR and four other popular penalization methods. The results are summarized in Table \ref{DLBCL}.

\begin{center}
[Insert Table \ref{DLBCL} approximately here]
\end{center}

\begin{table}[h!]
	\centering
	\caption{Estimated coefficients of {BAJSS-CBAR, BAJSS-Lasso, BAJSS-SCAD, BAJSS-MCP and BAJSS-Alasso for the DLBCL data}.}	
    \bigskip
    \scalebox{0.85}{
	\setlength{\tabcolsep}{2pt} 
	\renewcommand{\arraystretch}{1} 
	\begin{tabular}{lccccr}
	Parameter & BAJSS-CBAR & BAJSS-Lasso &BAJSS-SCAD & BAJSS-MCP & BAJSS-Alasso\\
			\hline			
$1456$ &-0.0591  &$-0.394$&$-0.609$&-0.630 &$-0.513$\\
$1819$ & &$-0.069$ & & &\\
$1863$ & &-0.006 & & &\\
$2603$ & &$-0.025$ & & &\\
$2672$ & &$-0.062$ & & &\\
$3236$ &-0.480 &-0.348 &-0.394 &-0.426 &-0.399\\
$5775$ &-0.261 &$-0.143$ &-0.133 &-0.131 &$-0.111$\\
$6566$ & &-0.088 & -0.061& -0.004&\\
			\hline
	Tuning parameters & \tabincell{l}{\  $\xi_n=43$\\ $\lambda_n=5.721$}&$\lambda=0.197$&\tabincell{l}{\ $\gamma=3.7$,\\ $\lambda=0.211$}& $\lambda=0.260$ &\tabincell{c}{\ $\gamma=3.598$,\\ $\lambda= 2.058$}\\
			\hline
	{Number of selected}  &3 & 8& 4 & 4& 3\\
            \hline
	{CV error}  &6.399 & 6.731 & 6.496  & 6.515& 6.472\\
			\hline
	\end{tabular}} \label{DLBCL}
\end{table}
It is seen that BJASS-CBAR is among the most sparse model and has the smallest  CV error, which is consistent with the findings in the simulation studies.
{
\subsection{Glioblastoma multiforme data} \label{GBMdata}

The glioblastoma multiforme (GBM) methylation data was downloaded from the TCGA program (https://www.cancer.gov/tcga) using TCGA-Assembler 2 (TA2).  The initial data consists of
577 patients and 20,156 GBM methylation variables.  After removing missing data, the complete case data includes $n=136$ patients and $p=20,037$ methylation variables. Applying the method described in Section
2.5, we first performed  sure joint screening using the BJASS method of \citet{YiLiu} reduce data dimension to $k=2 log(n) n^{\frac{1}{4}} =34$ before applying the CBAR penalization method and four other penalization methods (Lasso, SCAD, MCP and Alasso). The final variable selection results are summarized in the  Table \ref{GBM_M_1}.
\begin{center}
[Insert Table \ref{GBM_M_1} approximately here]
\end{center}
\begin{table}[h!]
	\centering
	\caption{{Estimated coefficients of BJASS-CBAR, BJASS-Lasso, BJASS-SCAD, BJASS-MCP and BJASS-Alasso for the TCGA GBM methylation data}}	
    \bigskip
    \scalebox{0.85}{
	\setlength{\tabcolsep}{2pt} 
	\renewcommand{\arraystretch}{1} 
	\begin{tabular}{lccccr}
	Variables & BJASS-CBAR & BJASS-Lasso &BJASS-SCAD & BJASS-MCP & BJASS-Alasso\\
			\hline	
BCL2L10&	&	0.051& 	0.038& 	&	0.038\\
CDCP2&	-0.272& 	-0.077& 	-0.057 &	&	-0.068\\
HES5&	&	-0.139 &	-0.153& 	-0.265 &	-0.162\\
HLA.E	&&	0.104 &	0.117 &	0.167 &	0.098\\
HRH3&	&	0.021 &	&	&	\\
IRX6&	&	0.014 &	&	&	\\
KIF5C&	&	0.004& 	&	&	\\
NIPSNAP3B&	&	0.034 &	& 	&	0.017\\
NPM2	&0.230 &	0.087& 	0.065& 	0.089 &	0.078\\
OXGR1&	&	0.059 &	0.066& 	&	0.045\\
SLC12A5	&0.282 &	0.144& 	0.104& 	0.072 &	0.167\\
SMIM11A&	0.417 &	0.349 &	0.469& 	0.507& 	0.418\\

			\hline
	Tuning parameters & \tabincell{l}{\  $\xi_n=19$\\ $\lambda_n=1.642$}&$\lambda=0.122$&\tabincell{l}{\ $\gamma=3.7$,\\ $\lambda=0.154$}& $\lambda=0.190$ & $\lambda= 0.625$\\
			\hline
	{Number of selected}  &4 & 12& 9 & 5& 9\\
            \hline
	{CV error}  &3.793	&3.832	&3.804&	3.835&	3.620\\
			\hline
	\end{tabular}} \label{GBM_M_1}
\end{table}

It is seen from Table \ref{GBM_M_1} that our BJASS-CBAR selected the sparsest model with 4 variables while achieving a comparable CV error as compared to the other four methods, which is consistent with our findings in simulation studies. It is interesting to note that  the four features selected by BJASS-CBAR have also been selected by three other methods. Among the four selected features, NPM2 and IRX6  have been previously discussed in the literature to possibly  play critical roles with human diseases \citep{Eirin2006Long,Box2016Nucleophosmin,Daphna2019Germline,MUMMENHOFF2001193}.
}
\section{Discussion}\label{section5}
We have rigorously extended the broken adaptive ridge (BAR) penalization method for simultaneous variable selection and parameter estimation to the semiparametric AFT model with right-censored data by coupling BAR penalization with the \citet{Leurgans1987} synthetic data.  We have established  that the resulting CBAR estimator is asymptotically consistency for variable selection, has an oracle estimation property, {and enjoys a grouping property for highly correlated covariates}.  We consider both low and high dimensional covariate settings. Our empirical studies demonstrate that CBAR generally produces a more sparse and  accurate model as compared to some popular $L_1$-based penalization methods, which corroborates previous findings in the literature for uncensored data.

We note that coupling the BAR method with the \citet{Leurgans1987} synthetic variable is only one of several possible ways of extending the BAR method to right censored linear model for simultaneous variable selection and parameter estimation.
For example, one may couple the BAR method with the \citet{Koul1981Regression} synthetic data method, the \citet{stute1993consistent} weighted least squares method, or  the   \citet{Buckley1979Linear} iterative imputation method. Our limited numerical studies (not reported here) indicate that using \citet{Koul1981Regression} synthetic data is generally inferior to
using \citet{Leurgans1987} synthetic variable, whereas  iteratively performing BAR using the \citet{Buckley1979Linear} imputation
may sometimes improve the performance of the CBAR method based on the \citet{Leurgans1987} synthetic variable. However, asymptotic properties  of each of these distinct approaches require different theoretical developments. Thorough investigations and comparisons of  these alternative approaches are needed in  future research.

{Lastly, missing data often occurs in real world applications. Although there is a vast amount literature on missing data problems, little has been done to deal with missing data in the context of variable selection for survival data. Further research in this domain is waranteed.
}
{
\section{Acknowledgement}
The Glioblastoma multiforme data used in Section \ref{GBMdata} are generated by the TCGA Research Network: https://www.cancer.gov/tcga.
 }

\begin{thebibliography}{43}
\providecommand{\natexlab}[1]{#1}
\providecommand{\url}[1]{\texttt{#1}}
\expandafter\ifx\csname urlstyle\endcsname\relax
  \providecommand{\doi}[1]{doi: #1}\else
  \providecommand{\doi}{doi: \begingroup \urlstyle{rm}\Url}\fi

\bibitem[MUM(2001)]{MUMMENHOFF2001193}
Expression of irx6 during mouse morphogenesis.
\newblock \emph{Mechanisms of Development}, 103\penalty0 (1):\penalty0 193 --
  195, 2001.
\newblock ISSN 0925-4773.

\bibitem[Box(2016)]{Box2016Nucleophosmin}
Nucleophosmin: from structure and function to disease development.
\newblock \emph{Bmc Molecular Biology}, 17\penalty0 (1), 2016.

\bibitem[Akaike(1974)]{akaike1974new}
H.~Akaike.
\newblock A new look at the statistical model identification.
\newblock \emph{IEEE Trans. Automat. Contr.}, 19:\penalty0 716--723., 1974.

\bibitem[Breheny and Huang(2011)]{breheny2011}
P.~Breheny and J.~Huang.
\newblock Coordinate descent algorithms for nonconvex penalized regression,
  with applications to biological feature selection.
\newblock \emph{The annals of applied statistics}, 5\penalty0 (1):\penalty0
  232--253., 2011.

\bibitem[Breiman(1996)]{breiman1996heuristics}
L.~Breiman.
\newblock Heuristics of instability and stabilization in model selection.
\newblock \emph{Ann. Statist.}, 24:\penalty0 2350--2383, 1996.

\bibitem[Buckley and James(1979)]{Buckley1979Linear}
J.~Buckley and I.~James.
\newblock Linear regression with censored data.
\newblock \emph{Biometrika}, 66\penalty0 (3):\penalty0 429--436, 1979.

\bibitem[Cai et~al.(2009)Cai, Huang, and Tian]{Cai2009Regularized}
T.~Cai, J.~Huang, and L.~Tian.
\newblock Regularized estimation for the accelerated failure time model.
\newblock \emph{Biometrics}, 65\penalty0 (2):\penalty0 394--404, 2009.

\bibitem[Chen and Chen(2008)]{chen2008extended}
J.~Chen and Z.~Chen.
\newblock Extended bayesian information criteria for model selection with large
  model spaces.
\newblock \emph{Biometrika}, 95:\penalty0 759--771, 2008.

\bibitem[Cox(1972)]{Cox1972}
B.~D.~R. Cox.
\newblock Regression models and life-tables.
\newblock \emph{Journal of the Royal Statistical Society: Series B
  (Methodological)}, 34\penalty0 (2):\penalty0 187--220., 1972.

\bibitem[Cui et~al.(2015)Cui, Li, and Zhong]{Cui2015Model}
H.~Cui, R.~Li, and W.~Zhong.
\newblock Model-free feature screening for ultrahigh dimensional discriminant
  analysis.
\newblock \emph{Journal of the American Statistical Association}, 110\penalty0
  (510):\penalty0 630--641, 2015.

\bibitem[Dai et~al.(2018{\natexlab{a}})Dai, Chen, and Li]{dai2018fusedBAR}
L.~Dai, K.~Chen, and G.~Li.
\newblock The broken adaptive ridge procedure and its applications.
\newblock \emph{Stat Sin}, 2018{\natexlab{a}}.
\newblock \doi{10.5705/ss.202018.0075}.

\bibitem[Dai et~al.(2018{\natexlab{b}})Dai, Chen, Sun, Liu, and Li]{DAI2018334}
L.~Dai, K.~Chen, Z.~Sun, Z.~Liu, and G.~Li.
\newblock Broken adaptive ridge regression and its asymptotic properties.
\newblock \emph{Journal of Multivariate Analysis}, 168:\penalty0 334--351,
  2018{\natexlab{b}}.

\bibitem[Datta et~al.(2007)Datta, Le-Rademacher, and
  Datta]{Datta2007Predicting}
S.~Datta, J.~Le-Rademacher, and S.~Datta.
\newblock Predicting patient survival from microarray data by accelerated
  failure time modeling using partial least squares and lasso.
\newblock \emph{Biometrics}, 63\penalty0 (1), 2007.

\bibitem[Eirin-Lopez and J.(2006)]{Eirin2006Long}
Eirin-Lopez and M.~J.
\newblock Long-term evolution and functional diversification in the members of
  the nucleophosmin/nucleoplasmin family of nuclear chaperones.
\newblock \emph{Genetics}, 173\penalty0 (4):\penalty0 1835--50, 2006.

\bibitem[Fan and Li(2001)]{fan2001}
J.~Fan and R.~Li.
\newblock Variable selection via nonconcave penalized likelihood and its oracle
  properties.
\newblock \emph{Journal of the American statistical Association}, 96\penalty0
  (456):\penalty0 1348--1360, 2001.

\bibitem[Fan and Li(2002)]{Fan2002}
J.~Fan and R.~Li.
\newblock Variable selection for cox's proportional hazards model and frailty
  model.
\newblock \emph{The Annals of Statistics}, 30\penalty0 (1):\penalty0 74--99,
  2002.

\bibitem[Fan and Lv(2008)]{Fan2008Sure}
J.~Fan and J.~Lv.
\newblock Sure independence screening for ultrahigh dimensional feature space.
\newblock \emph{Journal of the Royal Statistical Society}, 70\penalty0
  (5):\penalty0 849--911, 2008.

\bibitem[Foster and George(1994)]{foster1994risk}
D.~Foster and E.~George.
\newblock The risk inflation criterion for multiple regression.
\newblock \emph{Ann. Statist.}, 22:\penalty0 1947--1975, 1994.

\bibitem[Friedman et~al.(2010)Friedman, Hastie, and Tibshirani]{friedman2010}
J.~Friedman, T.~Hastie, and R.~Tibshirani.
\newblock Regularization paths for generalized linear models via coordinate
  descent.
\newblock \emph{Journal of statistical software}, 33\penalty0 (1):\penalty0
  1--22, 2010.

\bibitem[Huang and Ma(2010)]{Huang2010Variable}
J.~Huang and S.~Ma.
\newblock Variable selection in the accelerated failure time model via the
  bridge method.
\newblock \emph{Lifetime Data Analysis}, 16\penalty0 (2):\penalty0 176--95,
  2010.

\bibitem[Huang et~al.(2006)Huang, Ma, and Xie]{Huang2005}
J.~Huang, S.~Ma, and H.~Xie.
\newblock Regularized estimation in the accelerated failure time model with
  high-dimensional covariates.
\newblock \emph{Biometrics}, 62\penalty0 (3):\penalty0 813--820, 2006.

\bibitem[Johnson(2009)]{Johnson2009On}
B.~A. Johnson.
\newblock On lasso for censored data.
\newblock \emph{Electronic Journal of Statistics}, 3\penalty0 (2009):\penalty0
  485--506, 2009.

\bibitem[Johnson et~al.(2008)Johnson, Lin, and Zeng]{Johnson2008Penalized}
B.~A. Johnson, D.~Y. Lin, and D.~Zeng.
\newblock Penalized estimating functions and variable selection in
  semiparametric regression models.
\newblock \emph{Journal of the American Statistical Association}, 103\penalty0
  (482):\penalty0 672--680, 2008.

\bibitem[Kalbfleisch and Prentice(2002)]{Kalbfleisch2002}
J.~D. Kalbfleisch and R.~L. Prentice.
\newblock \emph{The Statistical Analysis of Failure Time Data, 2nd Edition}.
\newblock 2002.

\bibitem[Kawaguchi et~al.(2019)Kawaguchi, Suchard, Liu, and Li]{Eric}
E.~S. Kawaguchi, M.~A. Suchard, Z.~Liu, and G.~Li.
\newblock A surrogate l0 sparse cox's regression with applications to sparse
  high-dimensional massive sample size time-to-event data.
\newblock \emph{Statistics in Medicine}, n/a\penalty0 (n/a), 2019.
\newblock \doi{10.1002/sim.8438}.
\newblock URL \url{https://onlinelibrary.wiley.com/doi/abs/10.1002/sim.8438}.

\bibitem[Koul et~al.(1981)Koul, Susarla, and Ryzin]{Koul1981Regression}
H.~Koul, V.~Susarla, and J.~V. Ryzin.
\newblock Regression analysis with randomly right-censored data.
\newblock \emph{Annals of Statistics}, 9\penalty0 (6):\penalty0 1276--1288,
  1981.

\bibitem[Leurgans(1987)]{Leurgans1987}
S.~Leurgans.
\newblock Linear models, random censoring and synthetic data.
\newblock \emph{Biometrika}, 74\penalty0 (2):\penalty0 301--309, 1987.

\bibitem[Lin et~al.(2010)Lin, Foster, and Ungar]{lin2010risk}
D.~Lin, D.~P. Foster, and L.~H. Ungar.
\newblock A risk ratio comparison of l0 and l1 penalized regressions.
\newblock \emph{University of Pennsylvania, techical report}, 2010.

\bibitem[Liu et~al.(2019)Liu, Chen, and Li]{YiLiu}
Y.~Liu, X.~Chen, and G.~Li.
\newblock A new joint screening method for right-censored time-to-event data
  with ultra-high dimensional covariates.
\newblock \emph{Statistical Methods in Medical Research}, 2019.
\newblock \doi{10.1177/0962280219864710}.

\bibitem[Mallows(1973)]{mallows1973some}
C.~Mallows.
\newblock Some comments on $c_p$.
\newblock \emph{Technometrics}, 15:\penalty0 661--675, 1973.

\bibitem[Nachmani et~al.(2019)Nachmani, Bothmer, Grisendi, Mele, and
  Pandolfi]{Daphna2019Germline}
D.~Nachmani, A.~H. Bothmer, S.~Grisendi, A.~Mele, and P.~P. Pandolfi.
\newblock \emph{Nature Genetics}, 51\penalty0 (10):\penalty0 1518--1529, 2019.

\bibitem[Schwarz(1978)]{schwarz1978estimating}
G.~Schwarz.
\newblock Estimating the dimension of a model.
\newblock \emph{Ann. Statist.}, 6:\penalty0 461--464, 1978.

\bibitem[Shen et~al.(2012)Shen, Pan, and Zhu]{shen2012likelihood}
X.~Shen, W.~Pan, and Y.~Zhu.
\newblock Likelihood-based selection and sharp parameter estimation.
\newblock \emph{J. Amer. Statist. Assoc.}, 107:\penalty0 223--232, 2012.

\bibitem[Stute(1993)]{stute1993consistent}
W.~Stute.
\newblock Consistent estimation under random censorship when covariables are
  present.
\newblock \emph{Journal of Multivariate Analysis}, 45\penalty0 (1):\penalty0
  89--103., 1993.

\bibitem[Tibshirani(1996)]{tib1996}
R.~Tibshirani.
\newblock Regression shrinkage and selection via the lasso.
\newblock \emph{Journal of the Royal Statistical Society. Series B
  (Methodological)}, 58\penalty0 (1):\penalty0 267--288, 1996.

\bibitem[Tibshirani(1997)]{Tibshirani1997}
R.~Tibshirani.
\newblock The lasso method for variable selection in the cox model.
\newblock \emph{Statistics in Medicine}, 16\penalty0 (4):\penalty0 385--395,
  1997.

\bibitem[Wang et~al.(2008)Wang, Nan, Zhu, and Beer]{Wang2008Doubly}
S.~Wang, B.~Nan, J.~Zhu, and D.~G. Beer.
\newblock \emph{Biometrics}, 64\penalty0 (1):\penalty0 132--40, 2008.

\bibitem[Yi and Li(2009)]{Yi2009Dantzig}
Yi and Li.
\newblock Dantzig selector for censored linear regression models:with
  applications in high dimensional data analysis.
\newblock In \emph{International Conference on Financial Statistics and
  Financial Econometrics}, 2009.

\bibitem[Zhang(2010)]{zhang2010}
C.-H. Zhang.
\newblock Nearly unbiased variable selection under minimax concave penalty.
\newblock \emph{The Annals of statistics}, 38\penalty0 (2):\penalty0 894--942,
  2010.

\bibitem[Zhao et~al.(2019)Zhao, Wu, Li, and Sun]{Jianguo2019}
H.~Zhao, Q.~Wu, G.~Li, and J.~Sun.
\newblock Simultaneous estimation and variable selection for interval-censored
  data with broken adaptive ridge regression.
\newblock \emph{Journal of the American Statistical Association}, 0\penalty0
  (0):\penalty0 1--13, 2019.
\newblock \doi{10.1080/01621459.2018.1537922}.
\newblock URL \url{https://doi.org/10.1080/01621459.2018.1537922}.

\bibitem[Zhou(1992)]{Zhou1992Asymptotic}
M.~Zhou.
\newblock Asymptotic normality of the synthetic data regression estimator for
  censored survival data.
\newblock \emph{Annals of Statistics}, 20\penalty0 (2):\penalty0 1002--1021,
  1992.

\bibitem[Zhu et~al.(2011)Zhu, Li, Li, and Zhu]{Zhu2011Model}
L.~Zhu, L.~Li, R.~Li, and L.~Zhu.
\newblock Model-free feature screening for ultrahigh dimensional data.
\newblock \emph{Publications of the American Statistical Association},
  106\penalty0 (496):\penalty0 1464--1475, 2011.

\bibitem[Zou(2006)]{zou2006}
H.~Zou.
\newblock The adaptive lasso and its oracle properties.
\newblock \emph{Journal of the American statistical association}, 101\penalty0
  (476):\penalty0 1418--1429, 2006.

\end{thebibliography}

\begin{thebibliography}{37}
\providecommand{\natexlab}[1]{#1}
\providecommand{\url}[1]{{#1}}
\providecommand{\urlprefix}{URL }
\expandafter\ifx\csname urlstyle\endcsname\relax
  \providecommand{\doi}[1]{DOI~\discretionary{}{}{}#1}\else
  \providecommand{\doi}{DOI~\discretionary{}{}{}\begingroup
  \urlstyle{rm}\Url}\fi
\providecommand{\eprint}[2][]{\url{#2}}

\bibitem[{Akaike(1974)}]{akaike1974new}
Akaike, H. (1974). A new look at the statistical model identification. IEEE Transactions on Automatic Control, 19, 716--723.

\bibitem[{Breheny and Huang(2011)}]{breheny2011}
Breheny, P. and Huang, J. (2011) Coordinate descent algorithms for nonconvex penalized regression with applications to biological feature selection. The annals of
  applied statistics, 5(1), 232--253.

\bibitem[{Breiman(1996)}]{breiman1996heuristics}
Breiman, L. (1996). Heuristics of instability and stabilization in model
  selection. Ann Statist, 24, 2350--2383.

\bibitem[{Buckley and James(1979)}]{Buckley1979Linear}
Buckley, J. and James, I. (1979). Linear regression with censored data. Biometrika, 66(3), 429--436.

\bibitem[{Cai et~al.(2009)Cai, Huang, and Tian}]{Cai2009Regularized}
Cai, T., Huang, J. and Tian, L. (2009). Regularized estimation for the accelerated failure time model. Biometrics, 65(2), 394--404.

\bibitem[{Chen and Chen(2008)}]{chen2008extended}
Chen, J. and Chen, Z. (2008). Extended bayesian information criteria for model selection with large model spaces. Biometrika, 95, 759--771

\bibitem[{Cox(1972)}]{Cox1972}
Cox, D.R. (1972). Regression models and life-tables. Journal of the Royal Statistical Society: Series B (Methodological), 34(2):187--220.

\bibitem[{Cui et~al.(2015)Cui, Li, and Zhong}]{Cui2015Model}
Cui, H., Li, R. and Zhong, W. (2015). Model-free feature screening for ultrahigh dimensional discriminant analysis. Journal of the American Statistical Association, 110(510), 630--641.

\bibitem[{Dai et~al.(2018{\natexlab{a}})Dai, Chen, and Li}]{dai2018fusedBAR}
Dai, L., Chen, K. and Li, G. (2018{\natexlab{a}}). The broken adaptive ridge procedure and its applications. Statistica Sinica, \doi{10.5705/ss.202018.0075}.

\bibitem[{Dai et~al.(2018{\natexlab{b}})Dai, Chen, Sun, Liu, and
  Li}]{DAI2018334}
Dai, L., Chen, K., Sun, Z., Liu, Z. and Li, G. (2018{\natexlab{b}}). Broken adaptive ridge regression and its asymptotic properties. Journal of Multivariate Analysis, 168, 334--351.

\bibitem[{Dickson et~al.(1989)Dickson, Grambsch, Fleming, Fisher, and
  Langworthy}]{DicksonPrognosis}
Dickson, E.R., Grambsch, P.M., Fleming, T.R., Fisher, L.D. and Langworthy, A. (1989). Prognosis in primary biliary cirrhosis: Model for decision making. Hepatology, 10(1), 1--7.

\bibitem[{Fan and Li(2001)}]{fan2001}
Fan, J. and Li, R. (2001). Variable selection via nonconcave penalized likelihood and its oracle properties. Journal of the American statistical Association, 96(456), 1348--1360.

\bibitem[{Fan and Li(2002)}]{Fan2002}
Fan, J. and Li, R. (2002). Variable selection for cox's proportional hazards model and frailty model. The Annals of Statistics, 30(1), 74--99.

\bibitem[{Fan and Lv(2008)}]{Fan2008Sure}
Fan, J. and Lv, J. (2008). Sure independence screening for ultrahigh dimensional feature space. Journal of the Royal Statistical Society, 70(5), 849--911.

\bibitem[{Fleming and Harrington(1991)}]{Fleming1991Counting}
Fleming, T.R. and Harrington, D.P. (1991). Counting Processes and Survival Analysis. John Wiley \& Sons.

\bibitem[{Foster and George(1994)}]{foster1994risk}
Foster, D. and George, E. (1994). The risk inflation criterion for multiple regression. Ann Statist, 22, 1947--1975.

\bibitem[{Friedman et~al.(2010)Friedman, Hastie, and Tibshirani}]{friedman2010}
Friedman, J., Hastie, T. and Tibshirani, R. (2010). Regularization paths for generalized linear models via coordinate descent. Journal of statistical software, 33(1), 1--22.

\bibitem[{Jin et~al.(2006)Jin, Lin, and Ying}]{Jin2006On}
Jin, Z., Lin, D.Y. and Ying, Z. (2006). On least-squares regression with censored data. Biometrika, 93(1), 147--161.

\bibitem[Hu, J. , & Chai, H. . (2013)] {HuChai2013}
Adjusted regularized estimation in the accelerated failure time model with high dimensional covariates. Journal of Multivariate Analysis, 122(Complete), 96-114.

\bibitem[Johnson(2009)]{Johnson2009On}
B.~A. Johnson.
\newblock On lasso for censored data.
\newblock \emph{Electronic Journal of Statistics}, 3\penalty0 (2009):\penalty0
  485--506, 2009.
\bibitem[{Kalbfleisch and Prentice(2002)}]{Kalbfleisch2002}
Kalbfleisch, J.D. and Prentice, R.L. (2002). The Statistical Analysis of Failure Time Data, 2nd Edition.

\bibitem[{Kawaguchi et~al.(2019)Kawaguchi, Suchard, Liu, and Li}]{Eric}
Kawaguchi, E.S., Suchard, M.A., Liu, Z. and Li, G. (2019). A surrogate l0 sparse cox's regression with applications to sparse high-dimensional massive sample size time-to-event data. Statistics in Medicine n/a(n/a), \doi{10.1002/sim.8438},
\eprint{https://onlinelibrary.wiley.com/doi/pdf/10.1002/sim.8438}.

\bibitem[{Koul et~al.(1981)Koul, Susarla, and Ryzin}]{Koul1981Regression}
Koul, H., Susarla, V. and Ryzin, J.V. (1981). Regression analysis with randomly right-censored data. Annals of Statistics, 9(6), 1276--1288.

\bibitem[{Leurgans(1987)}]{Leurgans1987}
Leurgans, S. (1987). Linear models, random censoring and synthetic data.
  Biometrika, 74(2), 301--309.

\bibitem[{Liu et~al.(2019)Liu, Chen, and Li}]{YiLiu}
Liu, Y., Chen, X. and Li, G. (2019). A new joint screening method for right-censored time-to-event data with ultra-high dimensional covariates. Statistical Methods in Medical Research, \doi{10.1177/0962280219864710}.

\bibitem[{Mallows(1973)}]{mallows1973some}
Mallows, C. (1973). Some comments on $c_p$. Technometrics, 15, 661--675.

\bibitem[{Miller and Halpern(1982)}]{MILLER1982Regression}
Miller, R. and Halpern, J. (1982). Regression with censored data. Biometrika, 69(3), 521--531.

\bibitem[{Prentice(1978)}]{Prentice1978Linear}
Prentice, R.L. (1978). Linear rank tests with right censored data. Biometrika, 65(1), 167--179.

\bibitem[{Schwarz(1978)}]{schwarz1978estimating}
Schwarz, G. (1978). Estimating the dimension of a model. Ann Statist, 6, 461--464.

\bibitem[{Shen et al. (2012)}]{shen2012likelihood}
Shen, X., Pan, W. and Zhu, Y. (2012). Likelihood-based selection and sharp parameter estimation. Journal of the American statistical Association, 107, 223--232.

\bibitem[{Slutsky (1926)}]{Slutsky1926}
Slutsky, E. (1926). Uber stochastische asymptoten und grenzwerte. Tohoku
  Mathematical Journal, First Series, 27, 67--70.

\bibitem[{Stute(1993)}]{stute1993consistent}
Stute, W. (1993). Consistent estimation under random censorship when covariables are present. Journal of Multivariate Analysis, 45(1), 89--103.

\bibitem[{Tibshirani(1996)}]{tib1996}
Tibshirani, R. (1996). Regression shrinkage and selection via the lasso. Journal of the Royal Statistical Society: Series B (Methodological), 58(1), 267--288.

\bibitem[{Tibshirani(1997)}]{Tibshirani1997}
Tibshirani, R. (1997). The lasso method for variable selection in the cox model. Statistics in Medicine, 16(4), 385--395.

\bibitem[{Zhang(2010)}]{zhang2010}
Zhang, C.H. (2010). Nearly unbiased variable selection under minimax concave penalty. The Annals of statistics, 38(2), 894--942.

\bibitem[{Zhao et~al.(2019)Zhao, Wu, Li, and Sun}]{Jianguo2019}
Zhao, H., Wu, Q., Li, G. and Sun, J. (2019). Simultaneous estimation and variable selection for interval-censored data with broken adaptive ridge regression. Journal of the American Statistical Association 0(0):1--13,
  \doi{10.1080/01621459.2018.1537922}.

\bibitem[{Zhou(1992)}]{Zhou1992Asymptotic}
Zhou, M. (1992). Asymptotic normality of the synthetic data regression estimator for censored survival data. Annals of Statistics, 20(2), 1002--1021.

\bibitem[{Zhu et~al.(2011)Zhu, Li, Li, and Zhu}]{Zhu2011Model}
Zhu, L.P., Li, L., Li, R. and Zhu, L.X. (2011) Model-free feature screening for ultrahigh dimensional data. Journal of the American Statistical Association, 106(496), 1464--1475.

\bibitem[{Zou(2006)}]{zou2006}
Zou, H. (2006). The adaptive lasso and its oracle properties. Journal of the American statistical association, 101(476), 1418--1429.

\end{thebibliography}

\appendix
\section{Proofs of the theorem}
\label{sec: appdx}
We first introduce notations and lemmas used to prove Theorem \ref{theorem:CBAR}.
\par Using \citet{Leurgans1987} method, we transform $\*Y$ into synthetic data $\*Y^*$.
Let $ \bbeta=( \bm\alpha^\top, \bm\gamma^\top)^\top$, where $ \bm\alpha$ and $ \bm\gamma$ are $q_n\times 1$ and $(p_n-q_n) \times 1$ vector respectively, $\*\Sigma_n=\*X^\top \*X/n$.
\begin{equation}
\label{eq:t}
     g( \bbeta)=\{ \*X^\top \*X+\lambda_n \*D( \bbeta)\}^{-1}\* X^\top \*Y^* =( \bm\alpha^{*}( \bbeta)^\top, \bm\gamma^*( \bbeta)^\top)^\top.
\end{equation}
For simplicity, we write $ \bm\alpha^*( \bbeta)$ and $ \bm\gamma^*( \bbeta)$ as $ \bm\alpha^*$ and $ \bm\gamma^*$ hereafter. $\*\Sigma_n^{-1}$ can be partitioned as
\[ \*\Sigma_n^{-1}=\begin{pmatrix}
 \*A_{11} &  \*A_{12}\\
 \*A^\top_{12} &  \*A_{22}
\end{pmatrix}
\]
where the $A_{11} $ is a {$q\times q$ }matrix.
Multiplying $( \*X^\top \*X)^{-1}( \*X^\top \*X+\lambda_n \*D( \bbeta))$ to equation (\ref{eq:t})
\begin{equation}
\label{eq:t1}
\begin{pmatrix}
   {\bm\alpha}^*-{\bm\beta}_{01}\\
   {\bm\gamma}^*
  \end{pmatrix}
+\frac{\lambda_n}{n}\begin{pmatrix}
\*A_{11}\*D_1(\bm\alpha)\bm\alpha^*+\*A_{12}\*D_2(\bm\gamma)\bm\gamma^*\\
\*A_{12}^\top\*D_1(\bm\alpha)\bm\alpha^*+\*A_{22}\*D_2(\bm\gamma)\bm\gamma^*
\end{pmatrix}=(\*X^\top\*X)^{-1}\*X^\top \bm\varepsilon^* {=\hat{\bbeta}_{\rm Z}-\bbeta_0},
\end{equation}
where $\bm\varepsilon^*=\*Y^*-\*X{\bbeta_0}$, {$\hat\bbeta_{\rm Z}=(\*X^\top\*X)^{-1}\*X^\top\*Y^*$}, $ \*D_1( \bm\alpha)=\mbox{diag }(\alpha_1^{-2},...,\alpha_{{q}}^{-2})$ and $ \*D_2( \bm\gamma)=\mbox{diag }(\gamma_1^{-2},...,\gamma_{p_n-{q}}^{-2})$.
\begin{lemma}
	\label{lemmma:1}
	{ Let $\delta_n$ be a sequence of positive real numbers satisfying $\delta_n \to  \infty$ and $p_n\delta_n^2/\lambda_n \to  0$. }
	Define \textcolor{black}{$\*H_n = \{ \bbeta\in \mathbb{R}^{p_n}: \|\bm\beta-\bm\beta_0\| \leq \delta_n\sqrt{p_n/n}\}$ and $\*H_{n1} = \{ \bm\alpha\in \mathbb{R}^{{q}}: \|\bm\alpha-\bm\beta_{01}\| \leq \delta_n\sqrt{p_n/n}\}$}.  Assume conditions (C1)-(C5) hold. Then, with probability tending to $1$, we have
\begin{itemize}
	\item [(a)]
	$\sup_{ \bm\beta \in  \*H_n} {\| \bm\gamma^*\|}/{\| \bm\gamma\|}< {1}/{C_0},\mbox{ for some constant } C_0>1$;
	\item [(b)]
	$g $ is a mapping from $ \*H_n$ to itself.
\end{itemize}
\end{lemma}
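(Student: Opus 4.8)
\emph{Proof plan.} The plan is to argue from the block identity (\ref{eq:t1}), which expresses $g(\bbeta)=(\bm\alpha^{*\top},\bm\gamma^{*\top})^\top$ through the unpenalized synthetic-data least squares estimator $\hat{\bbeta}_{\rm Z}=(\*X^\top\*X)^{-1}\*X^\top\*Y^*$ and the reweighting matrix $\*D(\bbeta)$. First I would assemble the deterministic ingredients that hold uniformly over $\*H_n$: by (C4), $\*M(\bbeta):=\*X^\top\*X+\lambda_n\*D(\bbeta)\succeq (n/\tilde{C})\*I$ for every $\bbeta$, each block $\*A_{11},\*A_{12},\*A_{22}$ of $\*\Sigma_n^{-1}$ has spectral norm at most $\tilde{C}$, and a Schur-complement argument gives $\*A_{22}\succeq \tilde{C}^{-1}\*I$; since $p_n\delta_n^2/\lambda_n\to 0$ and $\lambda_n=o(\sqrt n)$ by (C5) the radius $\delta_n\sqrt{p_n/n}$ tends to $0$, so on $\*H_n$ one has $|\alpha_j|\ge a_0/2$ for all $j\le q$ and hence $\|\*D_1(\bm\alpha)\|\le 4/a_0^2$ for $n$ large; and $\|g(\bbeta)\|\le\sqrt{\tilde{C}/n}\,\|\*Y^*\|$ because $\*X\*M(\bbeta)^{-1}\*M(\bbeta)^{-1}\*X^\top\preceq(\tilde{C}/n)\*P$ with $\*P$ the projection onto the column space of $\*X$, so $\|\bm\alpha^*\|$ and $\|\bm\gamma^*\|$ are $O_P(\sqrt q)$ uniformly over $\*H_n$. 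I would also use the diverging-dimension rate of the synthetic least squares estimator guaranteed by (C1)--(C3), $\|\hat{\bbeta}_{\rm Z}-\bbeta_0\|=O_P(\sqrt{p_n/n})$, together with $\|\*Y^*\|=O_P(\sqrt{nq})$ from the finiteness of the synthetic-variable variance.

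For part (a) the decisive step is the lower block of (\ref{eq:t1}), $\bm\gamma^*+\tfrac{\lambda_n}{n}\*A_{22}\*D_2(\bm\gamma)\bm\gamma^*=\*r$, where $\*r$ denotes the last $p_n-q$ coordinates of $\hat{\bbeta}_{\rm Z}-\bbeta_0$ minus $\tfrac{\lambda_n}{n}\*A_{12}^\top\*D_1(\bm\alpha)\bm\alpha^*$. The obstacle is that $\*D_2(\bm\gamma)=\mathrm{diag}(\gamma_j^{-2})$ has entries that blow up as components of $\bm\gamma$ shrink, so a direct norm estimate is useless. I would get around this with the change of variable $\*v:=\*D_2(\bm\gamma)\bm\gamma^*$, which turns the equation into $\big(\*D_2(\bm\gamma)^{-1}+\tfrac{\lambda_n}{n}\*A_{22}\big)\*v=\*r$ with a symmetric positive definite coefficient matrix whose smallest eigenvalue is at least $\lambda_n/(n\tilde{C})$; hence $\|\*v\|\le(n\tilde{C}/\lambda_n)\|\*r\|$, and since $\bm\gamma^*=\*D_2(\bm\gamma)^{-1}\*v$ with $\|\*D_2(\bm\gamma)^{-1}\|=\max_j\gamma_j^2\le\|\bm\gamma\|^2$, this yields $\|\bm\gamma^*\|\le\|\bm\gamma\|^2(n\tilde{C}/\lambda_n)\|\*r\|$. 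Using $\|\*r\|\le\|\hat{\bbeta}_{\rm Z}-\bbeta_0\|+\tfrac{4\tilde{C}\lambda_n}{a_0^2 n}\|\bm\alpha^*\|=O_P(\sqrt{p_n/n}+\lambda_n\sqrt{q}/n)=O_P(\sqrt{p_n/n})$ (since $q\le p_n$ and $\lambda_n=o(\sqrt n)$) and $\|\bm\gamma\|\le\delta_n\sqrt{p_n/n}$ on $\*H_n$, I get $\sup_{\bbeta\in\*H_n}\|\bm\gamma^*\|/\|\bm\gamma\|\le\delta_n\sqrt{p_n/n}\,(n\tilde{C}/\lambda_n)\,O_P(\sqrt{p_n/n})=O_P(\tilde{C}\,\delta_n p_n/\lambda_n)\to 0$, because $\delta_n p_n/\lambda_n\le\delta_n^2 p_n/\lambda_n\to 0$; so with probability tending to $1$ this supremum is below $1/C_0$ with, say, $C_0=2$ (when $\bm\gamma=0$ one has $\bm\gamma^*=0$ and the ratio is read as $0$). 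The same computation also gives $\sup_{\bbeta\in\*H_n}\|\bm\gamma^*\|=o(\delta_n\sqrt{p_n/n})$.

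For part (b) I would bound $\|\bm\alpha^*-\bbeta_{01}\|$ from the upper block of (\ref{eq:t1}), $\bm\alpha^*-\bbeta_{01}=(\hat{\bbeta}_{\rm Z}-\bbeta_0)_1-\tfrac{\lambda_n}{n}\*A_{11}\*D_1(\bm\alpha)\bm\alpha^*-\tfrac{\lambda_n}{n}\*A_{12}\*v$: the three terms are $O_P(\sqrt{p_n/n})$, $O_P(\lambda_n\sqrt{q}/n)$ and (by $\|\*v\|\le(n\tilde{C}/\lambda_n)\|\*r\|$) at most $\tilde{C}^2\|\*r\|=O_P(\sqrt{p_n/n})$, so $\sup_{\bbeta\in\*H_n}\|\bm\alpha^*-\bbeta_{01}\|=O_P(\sqrt{p_n/n})=o(\delta_n\sqrt{p_n/n})$. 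Combined with the bound on $\|\bm\gamma^*\|$ from part (a), $\|g(\bbeta)-\bbeta_0\|^2=\|\bm\alpha^*-\bbeta_{01}\|^2+\|\bm\gamma^*\|^2=o(\delta_n^2 p_n/n)<\delta_n^2 p_n/n$ uniformly over $\*H_n$ for $n$ large, hence $g(\*H_n)\subseteq\*H_n$ with probability tending to $1$. The main technical point is twofold: neutralizing the exploding reweighting $\*D_2(\bm\gamma)$ via the reparametrization $\*v=\*D_2(\bm\gamma)\bm\gamma^*$ (which both symmetrizes the linear system and supplies the extra factor $\|\bm\gamma\|$ that produces the contraction in (a)), and keeping every estimate uniform over $\*H_n$ --- which works because each $\bbeta$-dependent object ($\*M(\bbeta)^{-1}$, $\*D_1(\bm\alpha)$, $\*D_2(\bm\gamma)^{-1}$, $g(\bbeta)$) admits a deterministic bound valid on all of $\*H_n$, so everything holds on the single event where $\|\hat{\bbeta}_{\rm Z}-\bbeta_0\|$ and $\|\*Y^*\|$ realize their stochastic orders.
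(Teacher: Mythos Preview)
Your proof is correct, and it takes a genuinely different route from the paper's argument for the key contraction step in part~(a). The paper works with the \emph{ratio vector} $\*d_{\bm\gamma^*/\bm\gamma}=(\gamma_1^*/\gamma_1,\dots,\gamma_{p_n-q}^*/\gamma_{p_n-q})^\top$: starting from the lower block of (\ref{eq:t1}), it first shows $\tfrac{\lambda_n}{n}\|\*A_{22}\*D_2(\bm\gamma)\bm\gamma^*\|\ge\tfrac{1}{\tilde C}\tfrac{\lambda_n}{n}\|\*D_2(\bm\gamma)\bm\gamma^*\|$ via the spectral decomposition of $\*A_{22}$, then combines this with $\|\*D_2(\bm\gamma)^{1/2}\|$ and $\|\*D_2(\bm\gamma)^{-1/2}\|$ estimates to bound $\|\*d_{\bm\gamma^*/\bm\gamma}\|$ and from there $\|\bm\gamma^*\|/\|\bm\gamma\|$. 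Your route instead \emph{symmetrizes} the problem through the substitution $\*v=\*D_2(\bm\gamma)\bm\gamma^*$, turning the block equation into $(\*D_2(\bm\gamma)^{-1}+\tfrac{\lambda_n}{n}\*A_{22})\*v=\*r$ with a positive-definite coefficient whose smallest eigenvalue is at least $\lambda_n/(n\tilde C)$; the bound $\|\*v\|\le(n\tilde C/\lambda_n)\|\*r\|$ then does all the work, and $\|\bm\gamma^*\|\le\|\bm\gamma\|^2\|\*v\|$ supplies the extra factor of $\|\bm\gamma\|$ for the contraction. Both arguments ultimately exploit the same eigenvalue lower bound on $\*A_{22}$, but yours packages it more cleanly as a single inverse bound; the paper's triangle-inequality chain is longer but gives an explicit expression $1/\{\lambda_n/(p_n\delta_n^2\tilde C)-1\}$ for the contraction factor rather than an $o_P(1)$ statement. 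For part~(b) the paper controls the cross term $\tfrac{\lambda_n}{n}\*A_{12}\*D_2(\bm\gamma)\bm\gamma^*$ by reusing its inequality $\tfrac{\lambda_n}{n}\|\*D_2(\bm\gamma)\bm\gamma^*\|\le\tilde C(\delta_n\sqrt{p_n/n}+\|\bm\gamma^*\|)$, whereas you reuse the bound on $\|\*v\|$ directly; the two give the same order. One small point: your constant in $\|g(\bbeta)\|\le\sqrt{\tilde C/n}\,\|\*Y^*\|$ is fine (it follows from $\*M(\bbeta)^{-2}\preceq(\tilde C/n)\*M(\bbeta)^{-1}$ and $\*X\*M(\bbeta)^{-1}\*X^\top\preceq\*P$), while the paper uses the coarser $\|\bm\alpha^*\|\le\|\hat{\bbeta}_{\rm Z}\|=O_p(\sqrt{p_n})$, but either bound suffices.
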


\begin{proof} We first prove part (a).

{First, under
$\lambda_n/\sqrt{n} \to  0$ and $p_n\delta_n^2/\lambda_n \to  0$, we have $\delta_n\sqrt{p_n/n} \to  0$.}

Let $\hat\bbeta_{\rm Z}=(\*X^\top\*X)^{-1}\*X^\top\*Y^*$,
$\omega_{ji}=((\*X^\top \*X)^{-1}\*X^\top )_{ji}$, $\mu_j^*=\sum_i \omega_{ji}\int_{0}^{T_n}{F_i dt}$ and $\bm \mu =(\mu_1^*, \mu_2^*, ..., \mu_{{pn}}^*)$.
{For any $p_n$-vector $\*b_n$ which $\| \*b_n\|\leq 1$, define $t_n^2= \*b_n^\top \*\Omega(\infty)) \*b_n$.
Then, we have $\sqrt{n} \, t_n^{-1} \*b_n^\top (\hat\bbeta_{\rm Z}-\bm\mu) \rightarrow_D N(0,1).$
This result can be proved using similar techniques to those used in the proof of Theorem 3.1 of \citet{Zhou1992Asymptotic} along the same lines  as outlined below: First, we separate $\*b_n^\top (\hat\bbeta_{\rm Z}-\bm\mu)$ like (3.6) in \citet{Zhou1992Asymptotic} with a main term $S_{\bbeta}(T^n)$ and a remainder term $SS_{\bbeta}(T^n)$, i.e., $\*b_n^\top (\hat\bbeta_{\rm Z}-\bm\mu)=S_{\bbeta}(T^n)+SS_{\bbeta}(T^n)$, where $S_{\bbeta}(T^n)$ is a weighted sum of $\hat H(t)-H(t)$ and  $\hat G(t)-G(t)$; and $SS_{\bbeta}(T^n)$ is a weighted sum of $(\hat H(t)-H(t))(\hat G(t)-G(t))$ and $(\hat H(t)-H(t))(\hat H(t)-H(t))$. Second, under conditions (C2) and (C3), one can show that $\sqrt{n}SS_{\bbeta}(T^n)$ is negligible. Finally, by applying the martingale central limit theorem and conditions (C1) and (C4), we  establish the asymptotic normality of $\sqrt{n}S_{\bbeta}(T^n)$. 
{ By condition (C1) and (C2), we have
$ \sqrt{n}t_n^{-1}\*b_n^\top(\bbeta_0 - \bm\mu) = o_p(1)$, for $\*b_n=\*e_i=(0,...,1,0,...,0)$}. Hence,
we have
 $\|\hat{\bbeta}_{\rm Z}-\bbeta_0\|^2=O_p(p_n/n)$.}

It then follows from (\ref{eq:t1}) that
	\begin{equation}
	\label{super:3}
	\sup_{ \beta \in  \*H_n}\big\|  \bm\gamma^*+ \lambda_n  \*A_{12}^\top  \*D_1( \bm\alpha) \bm\alpha^*/n + \lambda_n  \*A_{22} \*D_2( \bm\gamma) \bm\gamma^*/n \big\|=O_p(\sqrt{{p_n}/{n}}).
	\end{equation}
	Note that \textcolor{black}{$\|\bm\alpha - \bm\beta_{01}\|\leq \delta_n(p_n/n)^{1/2}$ and $\| \bm\alpha^*\|\leq \|g( \bbeta)\|\leq \|\hat{ \bbeta}_{\rm Z}\|=O_p({\sqrt{p_n}})$}. By assumptions (C4) and (C5), we have
	\begin{equation}
   \label{super:4}
    \begin{split}
	\sup_{ \bbeta \in  \*H_n}\left\| \lambda_n \*A_{12}^\top  \*D_1( \bm\alpha) \bm\alpha^*/n \right\|  & \leq \frac{\lambda_n}{n} \, \|  \*A_{12}^\top \|\sup_{ \bbeta \in  \*H_n}\| \*D_1( \bm\alpha)
	\bm\alpha^*\|\\ & \leq{\color{black} \sqrt{2}\, \tilde{C}\, \frac{\lambda_n}{n} \, {\frac{a_{1}}{a_{0}^2}}\sup_{ \bbeta \in  \*H_n}\| \bm\alpha^*\|}= o_p(\sqrt{{p_n}/{n}}),
    \end{split}
	\end{equation}
	where the second inequality uses the fact $\|\*A_{12}^\top \|\leq \sqrt{2}\, \tilde{C}$, which follows from the inequality $\|\*A_{12}\*A_{12}^\top \|-\|\*A_{11}^2\|\leq \|\*A_{11}^2+\*A_{12}\*A_{21}\|\leq\|\*\Sigma_n^{-2}\|<\tilde{C}^2.$
	Combining (\ref{super:3}) and  (\ref{super:4}) gives
	\begin{equation}
	\label{superr:1}
	\sup_{ \bbeta \in  \*H_n}\left\|  \bm\gamma^*+ \lambda_n \*A_{22} \*D_2( \bm\gamma) \bm\gamma^*/n \right\|=O_p(\sqrt{{p_n}/{n}}).
	\end{equation}
	Note that $ \*A_{22}=\sum_{i=1}^{p_n-{q}}\tau_{2i} \*u_{2i} \*u_{2i}^\top $ is positive definite and by the singular value decomposition, , where $\tau_{2i}$ and $ \*u_{2i}$ are eigenvalues and eigenvectors of $ \*A_{22}$. Then, since $1/\tilde{C} <\tau_{2i}< \tilde{C}$, we have
	\begin{equation}\nonumber
	\begin{split}
	\frac{\lambda_n}{n} \, \| \*A_{22} \*D_2( \bm\gamma) \bm\gamma^*\|&=\frac{\lambda_n}{n}\left\|\sum_{i=1}^{p_n-{q}}\tau_{2i} \*u_{2i} \*u_{2i}^\top  \*D_2( \bm\gamma) \bm\gamma^*\right\|
	= \frac{\lambda_n}{n}\left\{\sum_{i=1}^{p_n-{q}}\tau_{2i}^2\| \*u_{2i}^\top  \*D_2( \bm\gamma) \bm\gamma^*\|^2\right\}^{1/2}\\
	&\geq \frac{\lambda_n}{n}\frac{1}{\tilde{C}}\left\{\sum_{i=1}^{p_n-{q}}\| \*u_{2i}^\top  \*D_2( \bm\gamma) \bm\gamma^*\|^2\right\}^{1/2} = \frac{1}{\tilde{C}} \left\| \lambda_n  \*D_2( \bm\gamma) \bm\gamma^* /n \right\|.
	\end{split}
	\end{equation}
	This, together with (\ref{superr:1}) and (C4), implies that {with probability tending to $1$,}
	\begin{equation}
	\label{superr:2}
	\frac{1}{\tilde{C}}\left\|\lambda_n  \*D_2( \bm\gamma) \bm\gamma^*/n \right\|-\| \bm\gamma^*\|\leq \delta_n\sqrt{{p_n}/{n}}.
	\end{equation}
	\par Let $ \*d_{\gamma*/\gamma}=(\gamma^*_1/\gamma_1, \ldots,\gamma^*_{p_n-{q}}/\gamma_{p_n-{q}})^\top $. Because $\| \bm\gamma\|\leq\delta_n\sqrt{p_n/n}$,  we have
	\begin{equation}
	\label{superr:21}
	\frac{1}{\tilde{C}}\left\|\frac{\lambda_n}{n}\, \*D_2( \bm\gamma) \bm\gamma^*\right\| =\frac{1}{\tilde{C}}\frac{\lambda_n}{n}\left\|{ \{\*D_2( \bm\gamma)}\}^{1/2} \*d_{\bm\gamma*/\bm\gamma}\right\| \geq \frac{1}{\tilde{C}}\frac{\lambda_n}{n}\frac{\sqrt{n}}{\delta_n\sqrt{p_n}} \, \| \*d_{\bm\gamma*/\bm\gamma}\|
	\end{equation}
	and
	\begin{equation}
	\label{superr:22}
	\| \bm\gamma^*\| =\|{ \*D_2( \bm\gamma)}^{-1/2} \*d_{\bm\gamma*/\bm\gamma}\|\leq \frac{\delta_n\sqrt{p_n}}{\sqrt{n}}\, \| \*d_{\bm\gamma*/\bm\gamma}\|.
	\end{equation}
	Combining (\ref{superr:2}), (\ref{superr:21}) and (\ref{superr:22}), we have
	 that with probability tending to $1$,
	\begin{equation}
	\label{superr:23}
	\| \*d_{\bm\gamma*/\bm\gamma}\|\leq \frac{1}{{\lambda_n}/({p_n}\delta_n^2 \tilde{C})-1}<{1}/{C_0}
	\end{equation}
	{for some constant $C_0 > 1$ provided that $\lambda_n/({p_n}\delta_n^2) \to \infty$}.

It is worth noting that $\Pr(\| \*d_{\bm\gamma*/\bm\gamma}\| \to  0) \to  1$, as $n \to  \infty$.
	Furthermore,  with probability tending to $1$,
	\begin{equation*}
	\label{superr:24}
	\| \bm\gamma^*\|\leq\| \*d_{\bm\gamma^*/\bm\gamma}\|\max_{1\leq j\leq (p_n-{q})}|\bm\gamma_j|\leq\| \*d_{\bm\gamma^*/\bm\gamma}\|\times\|\bm\gamma\|\leq \| \bm\gamma\|/C_0.
	\end{equation*}
	This proves part (a).

	Next we prove part (b). First, it is easy to see from (\ref{superr:22}) and (\ref{superr:23}) that, as $n \to \infty$,
	\begin{equation}
	\label{lemma1:i1}
	\Pr  \Big (\|
	\bm\gamma^*\|\leq \delta_n\sqrt{p_n/n} \Big ) \to  1.
	\end{equation}
	Then, by (\ref{eq:t1}), we have
	\begin{equation}
	\label{super:6}
	\sup_{ \bm\beta \in  \*H_n}\left\|  \bm\alpha^*- \bbeta_{01}+ \lambda_n  \*A_{11} \*D_1( \bm\alpha) \bm\alpha^*/n +\lambda_n  \*A_{12} \*D_2( \bm\gamma)\bm\gamma^* /n \right\|=O_p(\sqrt{{p_n}/{n}}).
	\end{equation}
	Similar to (\ref{super:4}), it is easily to verify that
	\begin{equation}
	\label{super:61}
	\sup_{ \bbeta \in  \*H_n}\left\|  \lambda_n  \*A_{11} \*D_1( \bm\alpha)\bm\alpha^*/n \right\| = o_p(\sqrt{{p_n}/{n}}).
	\end{equation}
	Moreover, with probability tending to $1$,
	\begin{equation}
	\label{super:62}
	\sup_{ \bbeta \in  \*H_n} \left\| \lambda_n  \*A_{12} \*D_2( \bm\gamma) \bm\gamma^*/n \right\| \leq\frac{\lambda_n}{n} \sup_{ \bbeta \in \* H_n} \left\| \*D_2( \bm\gamma) \bm\gamma^*\right\| \times \| \*A_{12}\|\leq 2\sqrt{2}\tilde{C}^2\delta_n\sqrt{{p_n}/{n}},
	\end{equation}
	where the last step follows from (\ref{superr:2}),  (\ref{lemma1:i1}), and the fact that $\|\* A_{12}\|\leq \sqrt{2}\tilde{C}$. It follows from (\ref{super:6}), (\ref{super:61}) and (\ref{super:62}) that
	with probability tending to $1$,
	\begin{equation}
	\label{superr:3}
	\sup_{ \bbeta \in  \*H_n} \|  \bm\alpha^*- \bbeta_{01}\|\leq { \big (2\sqrt{2}\tilde{C}^2+1 \big )\delta_n n^{-1/2}\sqrt{p_n}}.
	\end{equation}
	Because $\delta_n\sqrt{p_n}/\sqrt{n} \to  0$, we have, as $n \to \infty$,
	{\color{black}\begin{equation}
	\label{lemma1:i2}
	\Pr ( \bm\alpha^*\in \*H_{n1}) \to  1.
	\end{equation}}
	Combining (\ref{lemma1:i1}) and (\ref{lemma1:i2}) completes the proof of part (b).
\end{proof}
\begin{lemma}
	\label{lemmma:2}
	Assume that (C1)-(C5) hold.
    For any ${q}$-vector $ {\*c}$ satisfying $\| {\*c}\|\leq 1$, define ${z^2}= {\*c}^\top \*\Omega_{1} {\*c}$ as in Theorem \ref{theorem:CBAR}.
	Define
	\begin{equation}\label{eq:7}
	f( \bm\alpha)=\{ \*X_{1} ^\top   \*X_1+\lambda_n \*D_1( \bm\alpha)\}^{-1} \*X_1^\top  \*Y^*.
	\end{equation}
	Then, with probability tending to $1$,	

	(a) $f( \bm\alpha)$ is a contraction mapping from {\color{black}$\*B_{n} \equiv\{ \bm\alpha\in \mathbb{R}^{{q}}: \|\bm\alpha-\bm\beta_{01}\| \leq \delta_n\sqrt{p_n/n}\}$} to itself;	

	(b)
	$\sqrt{n} \, {z^{-1} \*c^\top}(\hat{ \bm\alpha}^{\circ}- \bbeta_{01})  \rightsquigarrow \mathcal{N}(0,1),$
	where $\hat{ \bm\alpha}^{\circ}$ is the unique fixed point of $f(\bm\alpha)$ defined by
$$
	\hat{ \bm\alpha}^{\circ}= \{ \*X_{1}^\top { \*X_1}+\lambda_n \*D_1( \hat{ \bm\alpha}^{\circ})\}^{-1}{ \*X}_1^\top \*Y^*.
$$	
\end{lemma}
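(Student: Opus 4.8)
The plan is to verify the two hypotheses of the Banach fixed point theorem for $f$ on the closed ball $\*B_n$ --- that $f$ maps $\*B_n$ into itself and that it is a contraction there --- and then to linearize the fixed point equation $\hat{\bm\alpha}^{\circ}=f(\hat{\bm\alpha}^{\circ})$ and read off the limiting law from the asymptotic normality of the reduced-model synthetic-data least squares estimator. Throughout write $M_1(\bm\alpha)=\*X_1^\top\*X_1+\lambda_n\*D_1(\bm\alpha)$, so that $f(\bm\alpha)=M_1(\bm\alpha)^{-1}\*X_1^\top\*Y^*$, and let $\hat{\bbeta}_{1,\mathrm{Z}}=(\*X_1^\top\*X_1)^{-1}\*X_1^\top\*Y^*$ be the oracle least squares estimator based on the true submodel. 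Since $\bbeta_{02}=\mathbf{0}$, we have $\hat{\bbeta}_{1,\mathrm{Z}}-\bbeta_{01}=(\*X_1^\top\*X_1)^{-1}\*X_1^\top\bm\varepsilon^*$, which is exactly the object treated by the \citet{Zhou1992Asymptotic}-type argument sketched in the proof of Lemma \ref{lemmma:1}, now carried out with the $q$-dimensional design $\*X_1$ in place of $\*X$; this yields both $\|\hat{\bbeta}_{1,\mathrm{Z}}-\bbeta_{01}\|=O_p(\sqrt{p_n/n})$ and, for the contrast, $\sqrt{n}\,z^{-1}\*c^\top(\hat{\bbeta}_{1,\mathrm{Z}}-\bbeta_{01})\rightarrow_D \mathcal{N}(0,1)$ with $z^2=\*c^\top\*\Omega_1\*c$, the $o_p(1)$ negligibility of the synthetic-data bias coming from conditions (C1)--(C3).

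For part (a), left-multiplying $M_1(\bm\alpha)f(\bm\alpha)=\*X_1^\top\*Y^*$ by $(\*X_1^\top\*X_1)^{-1}$ gives the identity
\begin{equation*}
f(\bm\alpha)-\bbeta_{01}=\bigl(\hat{\bbeta}_{1,\mathrm{Z}}-\bbeta_{01}\bigr)-\frac{\lambda_n}{n}\,\*\Sigma_{n1}^{-1}\*D_1(\bm\alpha)f(\bm\alpha).
\end{equation*}
On $\*B_n$, $\delta_n\sqrt{p_n/n}\to0$, so each coordinate obeys $|\alpha_k|\ge a_0/2$ for $n$ large and $\|\*D_1(\bm\alpha)\|\le 4/a_0^2$; moreover $\|f(\bm\alpha)\|\le\|\hat{\bbeta}_{1,\mathrm{Z}}\|=O_p(\sqrt{p_n})$ because $\|M_1(\bm\alpha)^{-1}\*X_1^\top\*X_1\|\le1$, and $\lambda_{\min}(\*X_1^\top\*X_1)\ge n\,\lambda_{\min}(\*\Sigma_n)>n/\tilde C$ by eigenvalue interlacing and (C4). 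Hence the penalty term above is $O_p(\lambda_n\sqrt{p_n}/n)=o_p(\sqrt{p_n/n})$ uniformly over $\bm\alpha\in\*B_n$ by (C5), so $\sup_{\bm\alpha\in\*B_n}\|f(\bm\alpha)-\bbeta_{01}\|=O_p(\sqrt{p_n/n})=o_p(\delta_n\sqrt{p_n/n})$, giving $f(\*B_n)\subseteq\*B_n$ with probability tending to $1$. For the contraction, $f(\bm\alpha_1)-f(\bm\alpha_2)=M_1(\bm\alpha_1)^{-1}\lambda_n\bigl(\*D_1(\bm\alpha_2)-\*D_1(\bm\alpha_1)\bigr)f(\bm\alpha_2)$, and applying the mean value inequality coordinatewise to $t\mapsto t^{-2}$ on $\{|t|\ge a_0/2\}$,
\begin{equation*}
\|f(\bm\alpha_1)-f(\bm\alpha_2)\|\;\le\;\frac{16\,\tilde C\,\lambda_n}{a_0^3\,n}\,\|f(\bm\alpha_2)\|\,\|\bm\alpha_1-\bm\alpha_2\|\;=\;O_p\!\Bigl(\tfrac{\lambda_n\sqrt{p_n}}{n}\Bigr)\|\bm\alpha_1-\bm\alpha_2\|,
\end{equation*}
and $\lambda_n\sqrt{p_n}/n=(\lambda_n/\sqrt n)\sqrt{p_n/n}\to0$ by (C5). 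Thus, with probability tending to $1$, $f$ is a contraction of the closed (hence complete) ball $\*B_n$ into itself with modulus $o_p(1)<1$, and Banach's theorem provides the unique fixed point $\hat{\bm\alpha}^{\circ}\in\*B_n$.

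For part (b), specialize the identity to $\bm\alpha=\hat{\bm\alpha}^{\circ}$ and multiply through by $\sqrt n\,z^{-1}\*c^\top$:
\begin{equation*}
\sqrt{n}\,z^{-1}\*c^\top(\hat{\bm\alpha}^{\circ}-\bbeta_{01})=\sqrt{n}\,z^{-1}\*c^\top(\hat{\bbeta}_{1,\mathrm{Z}}-\bbeta_{01})-\frac{\lambda_n}{\sqrt n}\,z^{-1}\*c^\top\*\Sigma_{n1}^{-1}\*D_1(\hat{\bm\alpha}^{\circ})\hat{\bm\alpha}^{\circ}.
\end{equation*}
The first term on the right converges to $\mathcal{N}(0,1)$ by the reduced-model central limit theorem recalled in the first paragraph. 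For the second, $\hat{\bm\alpha}^{\circ}\in\*B_n$ forces $\|\*D_1(\hat{\bm\alpha}^{\circ})\hat{\bm\alpha}^{\circ}\|=\bigl(\sum_{k\le q}(\hat\alpha_k^{\circ})^{-2}\bigr)^{1/2}\le 2\sqrt q/a_0$, so with (C4) and $z$ bounded away from $0$ (positive definiteness of $\*\Omega_1$) the term is $O_p(\lambda_n\sqrt q/\sqrt n)=o_p(1)$ by (C5); Slutsky's theorem then gives the stated convergence.

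The main obstacle is the diverging-dimension bookkeeping: making every estimate of the penalty-induced perturbation genuinely uniform over $\*B_n$ while tracking the $p_n$ and $q$ factors so that the rate conditions in (C5) really do force both the perturbation and the contraction modulus to vanish; and, in the same spirit, transcribing the \citet{Zhou1992Asymptotic} martingale central limit theorem together with the $o_p(1)$ control of the synthetic-data bias from the full design $\*X$ to the reduced design $\*X_1$ under (C1)--(C3), which is what pins the limiting variance down to $z^2=\*c^\top\*\Omega_1\*c$.
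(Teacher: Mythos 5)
Your proof is correct and follows essentially the same route as the paper's: the same identity $f(\bm\alpha)-\bbeta_{01}=(\hat{\bbeta}_{1\mathrm{Z}}-\bbeta_{01})-\tfrac{\lambda_n}{n}\*\Sigma_{n1}^{-1}\*D_1(\bm\alpha)f(\bm\alpha)$ both for the self-mapping property and for isolating the $O_p(\lambda_n/\sqrt{n})$ penalty bias in part (b), the same reliance on the \citet{Zhou1992Asymptotic}-type central limit theorem for the reduced-model synthetic-data least squares estimator under (C1)--(C3), and a Banach fixed-point argument on the complete ball $\*B_n$. The only cosmetic difference is that you certify the contraction via the resolvent-difference identity and the mean value inequality for $t\mapsto t^{-2}$, whereas the paper bounds $\sup_{\bm\alpha\in\*B_n}\|\dot f(\bm\alpha)\|=o_p(1)$ by implicit differentiation and the eigendecomposition of $\*\Sigma_{n1}$; both yield an $o_p(1)$ contraction modulus and the same conclusion.
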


\begin{proof}
We first prove part (a). Note that  (\ref{eq:7}) can be rewritten  as
\begin{equation}\nonumber
f( \bm\alpha)- \bbeta_{01}+\frac{\lambda_n}{n} \bm\Sigma_{n1}^{-1} \*D_1( \bm\alpha)f( \bm\alpha) = {\hat{\bbeta}_{1\rm Z}- \bbeta_{01}}.
\end{equation}
{where $\hat{\bbeta}_{1\rm Z}=( \*X_1^\top \*X_1)^{-1} \*X_1^\top \* Y^*$}.
Then,
\begin{equation}
\label{lemma2:m1}
\sup_{ \bm\alpha\in \*B_n}\left\|f( \bm\alpha)- \bbeta_{01}+(\lambda_n/n) \*\Sigma_{n1}^{-1} \*D_1( \bm\alpha)f( \bm\alpha) \right\|= O_p({1/\sqrt{n}}).
\end{equation}
\begin{equation}
\label{lemma2:m2}
\sup_{ \bm\alpha\in \*B_n}\left\|(\lambda_n/n) \*\Sigma_{n1}^{-1} \*D_1( \bm\alpha)f( \bm\alpha) \right\| =o_p({1/\sqrt{n}}).
\end{equation}
It follows from (\ref{lemma2:m1}) and (\ref{lemma2:m2}) that
\begin{equation}
\label{lemma2:m3}
\sup_{ \bm\alpha\in \*B_n}\left\|f( \bm\alpha)- \bbeta_{01} \right\|\leq \delta_n{/\sqrt{n}},
\end{equation}
where $\delta_n \to\infty$ and $\delta_n{/\sqrt{n}}\rightarrow 0$. Then we can get
\begin{equation}
\label{map}
\mbox{Pr}(f( \bm\alpha)\in \*B_n)\rightarrow 1, \mbox{  as } n\rightarrow \infty.
\end{equation}
This means that $f$ is a mapping from the region $\*B_n$ to itself.

Rewrite (\ref{eq:7}) as
$\{ \*X_{1}^\top{ \*X_1}+\lambda_n \*D_1( \bm\alpha)\}f( \bm\alpha)= \*X_{1}^\top \*Y^*$,
then, we have
\begin{equation}
( \*\Sigma_{n1}+(\lambda_n/n){ \*D_1}(\bm \alpha))\dot{f}( \bm\alpha)+(\lambda_n/n)\mbox{ diag } \{-2f_j( \bm\alpha)/{\alpha_j^3}\} ={ 0},
\end{equation}
where $\dot{f}( \bm\alpha)={\partial f( \bm\alpha)}/{\partial { \bm\alpha^\top}}$ and $\mbox{ diag } \{\frac{-2f_j( \bm\alpha)}{\alpha_j^3}\}= \mbox{ diag }\{\frac{-2f_1( \bm\alpha)}{\alpha_1^3},...,\frac{-2f_{{q}}( \bm\alpha)}{\alpha_{{q}}^3}\}.$
With the assumption $\lambda_n/\sqrt{n}\rightarrow 0$,
\begin{equation}
\label{eq:a}
\sup_{ \bm\alpha\in \*B_n }\|\{ \*\Sigma_{n1}+\frac{\lambda_n}{n}{ \*D}_1( \bm\alpha)\}\dot{f}( \bm\alpha)\| =  \sup_{ \bm\alpha\in \*B_n} \frac{2\lambda_n}{n}\|\mbox{ diag } \{\frac{f_j( \bm\alpha)}{\alpha_j^3}\}\|=o_p(1).
\end{equation}
Write $ \*\Sigma_{n1} = \sum_{i=1}^{{q}}\tau_{1i} \*u_{1i} \*u_{1i}^\top$, where $\tau_{1i}$ and $ \*u_{1i}$ are eigenvalues and eigenvectors of $ \*\Sigma_{n1}$. Then,  by (C4), $1/\tilde{C}<\tau_{1i}< \tilde{C}$ for all $i$  and
\begin{equation}
\label{superr:38}
\begin{split}
\| \*\Sigma_{n1}\dot{f}( \bm\alpha)\|&=\sup_{\| \*x\|=1, \*x\in R^{{q}}}\| \*\Sigma_{n1}\dot{f}( \bm\alpha) \*x\|=\sup_{\| \*x\|=1, \*x\in R^{{q}}}\left\|\sum_{i=1}^{{q}}\lambda_{1i} \*u_{1i} \*u_{1i}^\top\dot{f}( \bm\alpha) \*x\right\|\\
&= \sup_{\| \*x\|=1, \*x\in R^{{q}}}\left(\sum_{i=1}^{{q}}\lambda_{1i}^2\| \*u_{1i}^\top\dot{f}( \bm\alpha) \*x\|^2\right)^{1/2}
\geq \sup_{\| \*x\|=1, \*x\in R^{{q}}}\frac{1}{\tilde{C}}\left(\sum_{i=1}^{{q}}\| \*u_{1i}^\top\dot{f}(\bm \alpha) \*x\|^2\right)^{1/2} \\
&= \sup_{\| \*x\|=1, \*x\in R^{{q}}}\frac{1}{\tilde{C}} \|\dot{f}( \bm\alpha) \*x\|=\frac{1}{\tilde{C}}\|\dot{f}( \bm\alpha)\|.
\end{split}
\end{equation}
Therefore, it follows from $\bm \alpha\in \*B_n$,  (\ref{superr:38}) and  (C4) that
\begin{align*}
\left\|\left\{ \*\Sigma_{n1}+(\lambda_n/n){\* D}_1( \bm\alpha)\right\}\dot{f}( \bm\alpha)\right\|
&\geq \left\| \*\Sigma_{n1}\dot{f}( \bm\alpha)\right\|-\left\|(\lambda_n/n){ \*D}_1( \bm\alpha)\dot{f}( \bm\alpha)\right\|\\
&\geq (1/\tilde{C})\|\dot{f}( \bm\alpha)\|-(\lambda_n/n)\cdot {a_{0}^{-2}}\|\dot{f}( \bm\alpha)\|,
\end{align*}
This, together with (\ref{eq:a}) and the fact  $\lambda_n/n\rightarrow 0$, implies that
\begin{equation}
\label{eq:b}
\sup_{ \bm\alpha\in \*B_n}\|\dot{f}( \bm\alpha)\|=o_p(1).
\end{equation}
Finally, we can get the conclusion in part (a) from (\ref{map}) and (\ref{eq:b}).

Next we prove part (b).  Write
\begin{equation}
\label{eq:decomposition}
\begin{split}
n^{1/2} \, {z^{-1} \*c^\top} (\hat{ \bm\alpha}^\circ- \bbeta_{01})&=n^{1/2} \, {z^{-1} \*c^\top} \left[ \left\{ \*\Sigma_{n1}+\frac{\lambda_n}{n} \*D_1(\hat{ \bm\alpha}^\circ)\right\}^{-1} \*\Sigma_{n1}- \*I_{q_n}\right] \bbeta_{01}\\
&+  n^{-1/2}\, {z^{-1} \*c^\top} \left\{ \*\Sigma_{n1}
+\frac{\lambda_n}{n} \*D_1(\hat{ \bm\alpha}^\circ)\right\}^{-1} \*X_{1}^\top {\bm\varepsilon}^* \equiv I_1 +I_2.
\end{split}
\end{equation}
By the first order resolvent expansion formula \[( \*H+ \*\Delta)^{-1}=  \*H^{-1}- \*H^{-1}\*\Delta( \*H+\*\Delta)^{-1},\]
the first term on the right hand side of equation (\ref{eq:decomposition}) can be rewritten as
\begin{equation}\nonumber
I_1 = - {z^{-1} \*c^\top}  \*\Sigma_{n1}^{-1}\frac{\lambda_n}{\sqrt{n}} \*D_1(\hat{ \bm\alpha}^\circ)\left\{ \*\Sigma_{n1}
+\frac{\lambda_n}{n} \*D_1(\hat{ \bm\alpha}^\circ)\right\}^{-1}\* \Sigma_{n1}\bbeta_{01}.
\end{equation}
Hence, by the assumption (C4) and (C5), we have
\begin{equation}\label{eq:27a}
	\|I_1\|\leq {\color{black}\frac{\lambda_n}{\sqrt{n}}{z^{-1}a_{0}^{-2}}\|\*\Sigma_{n1}^{-1}\bbeta_{01}\|=O_p\bigg({\lambda_n/\sqrt{n}}\bigg) \to  0.}
	\end{equation}
Furthermore, applying the first order resolvent expansion formula, it can be shown that
\begin{equation}\label{eq:A28}
\begin{split}
I_2&={\frac{z^{-1} }{\sqrt{n}}\*c^{\T} \*\Sigma_{n1}^{-1}\*X_1^{\T}{\bm\varepsilon}^*+o_p(1)}\\
&={\frac{z^{-1} }{\sqrt{n}}\*c^{\T} \*\Sigma_{n1}^{-1}\*X_1^{\T}(\*Y^*-\*X_1 \bm\mu+\*X_1 \bm\mu-\*X_1\bbeta_{01})+o_p(1)}\\
&={\sqrt{n}z^{-1}\*c^{\T}(\hat{\bbeta}_{1\rm Z}-\bm\mu_1+\bm\mu_1- \bbeta_{01}) +o_p(1)  }\\
\end{split}
\end{equation}
{where $\bm \mu_1 =(\mu_1^*, \mu_2^*, ..., \mu_{q}^*)$. $I_2$} converges in distribution to $N(0,1)$ by the Lindeberg-Feller central limit theorem. Finally, combining (\ref{eq:decomposition}), (\ref{eq:27a}), and (\ref{eq:A28}) proves part (b).
\end{proof}

\bigskip
\noindent
{\bf  Proof of Theorem \ref{theorem:CBAR}.}
Given the initial  ridge estimator $\hat{ \bbeta}^{(0)}$ in (\ref{ridge}), we have
\begin{equation}
\begin{split}
\hat{\bm\beta}^{(0)}-\bm\beta_0&=[ ( \*\Sigma_{n}+\frac{\xi_n}{n} \*I_{p_n})^{-1} \*\Sigma_{n}- \*I_{p_n}] \bbeta_{0}+( \*\Sigma_{n} +\frac{\xi_n}{n} \*I_{p_n})^{-1} \*X^\top {\bm\v}^*/n.\\
&\equiv \*T_1+\*T_2.
\end{split}
\end{equation}
By  the first order resolvent expansion formula and $\xi_n/\sqrt{n}\rightarrow 0$,
\begin{equation}
\|\*T_1\|=\left\|-  \*\Sigma_{n}^{-1}\frac{\xi_n}{{n}} ( \*\Sigma_{n}
+\frac{\xi_n}{n}\*I_{p_n})^{-1}\* \Sigma_{n}\bbeta_{0}\right\|\leq \tilde{C}^3\frac{\xi_n {a_{1}}\sqrt{p_n}}{n}
=o_p(\sqrt{\frac{p_n}{n}}).
\end{equation}
It is easy to see that $
\|\*T_2\|=O_p(\sqrt{{p_n}/{n}}).
$
Thus $\|\hat{ \bbeta}^{(0)}- \bbeta_0\|=O_p((p_n/n)^{1/2})$.
This, combined with part (a) of  Lemma 1, implies that
\begin{equation}\label{eq:A29}
\mbox{Pr}(\lim_{k\rightarrow\infty}{\hat{ \bm\gamma}^{(k)}}=  0)\rightarrow 1.
\end{equation}
Hence, to prove part (i) of Theorem 1, it is sufficient to show that
\begin{equation}\label{eq:A30}
\mbox{Pr}(\lim_{k\rightarrow \infty}\| {\hat{ \bm\alpha}^{(k)}}-\hat{ \bm\alpha}^\circ\|=0)\rightarrow 1,
\end{equation}
where $\hat{ \bm\alpha}^\circ$ is the fixed point of $f(\bm \alpha)$ defined in part (b) of Lemma 2.

Define $ \bm\gamma^*= 0$ if $\bm\gamma =  0$, for any $\bm \alpha\in \*B_n$,
\begin{equation}\label{eq:A31}
\lim_{ \bm\gamma\rightarrow 0}  \bm\gamma^*( \bm\alpha,\bm \gamma)= 0.
\end{equation}
Combining (\ref{eq:A31}) with the fact
\begin{equation}\nonumber
\begin{pmatrix}
 \*X_1^\top{ \*X_1}+\lambda_n \*D_1( \bm\alpha) &  \*X_1^\top{\* X_2}\\
 \*X_2^\top{ \*X_1}&  \*X_2^\top{ \*X_2}+\lambda_n \*D_2( \bm\gamma)
\end{pmatrix}
\begin{pmatrix}
 \bm\alpha^*\\
 \bm\gamma^*
\end{pmatrix}
=\begin{pmatrix}
 \*X_1^\top \*Y^*\\
 \*X_2^\top \*Y^*
\end{pmatrix},
\end{equation}
implies that for any $\bm \alpha\in \*B_n$,
\begin{equation}
\label{lim:111}
\lim_{ \bm\gamma\rightarrow 0} \bm\alpha^*(\bm \alpha, \bm\gamma)=\{ \*X_1^{\T}{\* X_1}+\lambda_n \*D_1( \bm\alpha)\}^{-1}\* X_1 \*Y^*{=f( \bm\alpha)}.
\end{equation}
Therefore, $g(\cdot)$ is continuous and thus uniformly continuous on the compact set $ \bbeta\in  \*H_n$.  This, together with (\ref{eq:A29}) and (\ref{lim:111}), implies that
 as $k\rightarrow\infty$,
\begin{eqnarray}
\label{eq:c}
\eta_k\equiv \sup_{\bm \alpha\in \*B_n}\left\|f( \bm\alpha)- \bm\alpha^*( \bm\alpha,{\hat{ \bm\gamma}^{(k)}})\right\|\longrightarrow 0,
\end{eqnarray}
with probability tending to 1.

Note that
\begin{eqnarray}
\nonumber
\| {\hat{ \bm\alpha}^{(k+1)}}-\hat{\bm \alpha}^\circ\| &=& \left\| \bm\alpha^*({\hat{ \bm\beta}^{(k)}})-\hat{ \bm\alpha}^\circ\right\|
\le  \left\| \bm\alpha^*({\hat{\bm \beta}^{(k)}})-f({\hat{\bm \alpha}^{(k)}})\right\|+\|f({\hat{ \bm\alpha}^{(k)}})-\hat{ \bm\alpha}^\circ\|\\
&\le & \eta_k + \frac{1}{\tilde{C}}\|{\hat{ \bm\alpha}^{(k)}}-\hat{ \bm\alpha}^\circ\|,
\label{eq:A34}
\end{eqnarray}
where the last step follows from
$\|f({\hat{ \bm\alpha}^{(k)}})-\hat{\bm \alpha}^\circ\|=\|f({\hat{ \bm\alpha}^{(k)}})-f(\hat{ \bm\alpha}^\circ)\|\leq (1/\tilde{C})\|{\hat{ \bm\alpha}^{(k)}}-\hat{ \bm\alpha}^\circ\|$.
Let $a_k=\| {\hat{ \bm\alpha}^{(k)}}-\hat{\bm \alpha}^\circ\|$, for all $k\geq 0$.
From (\ref{eq:c}), we can induce that  with probability tending to 1, for any $ \epsilon>0$, there exists an positive integer $N$ such that for all $k> N$, $|\eta_k|<\epsilon$ and
\begin{align*}
a_{k+1} &\leq \frac{a_{k-1}}{\tilde{C}^2} + \frac{\eta_{k-1}}{\tilde{C}}+\eta_k\\
& \leq \frac{a_1}{\tilde{C}^k}+\frac{\eta_1}{\tilde{C}^{k-1}}+ \cdots+\frac{\eta_N}{\tilde{C}^{k-N}}+ (\frac{\eta_{N+1}}{\tilde{C}^{k-N-1}}+\cdots +\frac{\eta_{k-1}}{\tilde{C}}+\eta_k)\\
&\le (a_1+\eta_1+...+\eta_N) \frac{1}{\tilde{C}^{k-N}} + \frac{1-(1/\tilde{C})^{k-N}}{1-1/\tilde{C}} \epsilon
\rightarrow 0, \mbox{ as } k\rightarrow\infty.
\end{align*}
This proves (\ref{eq:A30}).

Therefore, it immediately follows from (\ref{eq:A29}) and (\ref{eq:A30}) that the with probability tending to 1,
$ \lim_{k\to\infty} \bbeta^{(k)}= \lim_{k\to\infty} (\hat{ \bm\alpha}^{(k)\top} , \hat{ \bm\gamma}^{(k)\top})^\top=(\hat{ \bm\alpha}^{\circ\top} , 0)^{\T}$, which completes the proof of
part (i). This, in addition to  part (b) of Lemma 2, proves part (ii) of Theorem \ref{theorem:CBAR}.   \qed

{{\bf  Proof of Theorem \ref{theorem:2}.}
Recall that $\hat{\bbeta}^* =\lim_{k\to\infty}\hat{ \bm\beta}^{(k+1)}$ and
$\hat{ \bm\beta}^{(k+1)}=\arg\min_{  \bm\beta} \{ Q(\bbeta| \hat{ \bm\beta}^{(k)})\}$, where
$$
Q(\bbeta| \hat{ \bm\beta}^{(k)})= \| \*Y^*-\* X \bbeta\|^2+\lambda_n
 \sum_{\ell=1}^{p_n} {\beta_\ell^2}/{\{\hat{\beta}^{(k)}_\ell\}^2}.
$$
If $\beta_\ell^*\ne 0$ for $\ell \in \{ i,j\}$, then $\hat{\bbeta}^* $ must
satisfy the following normal equations for $\ell \in \{ i,j\}$:
\begin{equation}\nonumber
-2 \*x_\ell^\top \{\*Y^*- \*X\hat{ \bbeta}^{(k+1)}\}+2\lambda_n {\hat{\beta}_\ell^{(k+1)}}/{\{\hat{\beta}^{(k)}_\ell\}^2} =0.
\end{equation}
Thus, for $\ell \in \{ i, j \}$,
\begin{equation}
\label{eqss:2}
{\hat{\beta}_\ell^{(k+1)}}/{\{\hat{\beta}^{(k)}_\ell \}^2}  = {\*x_\ell^\top \hat{\bm\varepsilon}^{*(k+1)}}/{\lambda_n},
\end{equation}
where $\hat{\bm\varepsilon}^{*(k+1)}= \*Y^*- \*X\hat{ \bbeta}^{(k+1)}$.
Moreover, because
$$\|\hat{\bm\varepsilon}^{*(k+1)}\|^2+ \lambda_n\sum_{i=1}^{p_n}\frac{\hat{\beta}_i^{2}}{\tilde{\beta}_i^2}=Q(\hat{ \bm\beta}^{(k+1)}| \hat{ \bm\beta}^{(k)})\leq
Q(0|\hat{ \bm\beta}^{(k)}) =\| \*Y^*\|^2,$$
we have
\begin{equation}
\label{eq:A36}
\|\hat{\bm\varepsilon}^{*(k+1)}\|\leq \| \*\*Y^*\|
\end{equation}
Letting $k\to \infty$ in (\ref{eqss:2}) and (\ref{eq:A36}), we have, for $\ell \in \{i, j\}$ and $\|\hat{\bm\varepsilon}^{*}\|\leq \| \*Y^*\|$,
$\hat{\beta}_\ell ^{*-1}= \*x_\ell ^\top \hat{\bm\varepsilon}^{*}{\lambda_n}$,
where $\hat{\bm\varepsilon}^{*} = \*Y^*- \*X\hat{ \bbeta}^*$.
Therefore,
\[\big|\hat{\beta}_i^{*-1}-\hat{\beta}_j^{*-1}\big|\leq \frac{1}{\lambda_n} \, \| \*Y^*\| \times \|\*x_i - \*x_j\| = \frac{1}{\lambda_n}
\, \| \*Y^*\|\sqrt{2(1-\rho_{ij})}.
\]
 \hfill $\Box$}

\end{document}